\newtheorem{lemma}{Lemma}
\newtheorem{theorem}{Theorem}
\newtheorem{corollary}{Corollary}
\newtheorem{myclaim}{Claim}{\bf}{\it}
\newtheorem{definition}{Definition}{\bf}{\it}
\newtheorem{open}{Open Question}{\bf}{\it}
\DeclareMathOperator*{\argmax}{arg\,max}
\providecommand{\keywords}[1]{\textbf{\textit{Keywords---}} #1}
\begin{document}
\title{On the Approximability and Hardness of the Minimum Connected Dominating 
Set with Routing Cost Constraint}
\author{Tung-Wei Kuo%
  \thanks{E-mail: \texttt{twkuo@cs.nccu.edu.tw}}
  }
\affil{Department of Computer Science, National Chengchi University}

\date{Dated: \today}
\maketitle

\begin{abstract}
In the problem of minimum connected dominating set with routing cost constraint, 
we are given a graph $G=(V,E)$, and the goal is to find the smallest connected 
dominating set $D$ of $G$ such that, 
for any two non-adjacent vertices $u$ and $v$ in $G$, 
the number of internal nodes on the shortest path between 
$u$ and $v$ in the subgraph of $G$ induced by $D \cup \{u,v\}$ is at most 
$\alpha$ times that in $G$.  
For general graphs, the only known previous approximability result 
is an $O(\log n)$-approximation algorithm ($n=|V|$) 
for $\alpha = 1$ by Ding \textit{et al.} 
For any constant $\alpha > 1$, 
we give an $O(n^{1-\frac{1}{\alpha}}(\log n)^{\frac{1}{\alpha}})$-approximation 
algorithm.
When $\alpha \geq 5$, we give an $O(\sqrt{n}\log n)$-approximation algorithm.
Finally, we prove that, when $\alpha =2$, 
unless $NP \subseteq DTIME(n^{poly\log n})$, for any constant $\epsilon > 0$,
the problem admits no polynomial-time
$2^{\log^{1-\epsilon}n}$-approximation algorithm, 
improving upon the $\Omega(\log n)$ bound by 
Du \textit{et al.} (albeit under
a stronger hardness assumption).  
\end{abstract}
\keywords{Connected dominating set, spanner, set cover with pairs, MIN-REP problem}

\section{Introduction}
\subsection{Motivation}
In wireless network routing, a common approach is to 
select a set of nodes as the \textit{virtual backbone}. 
The virtual backbone is responsible for relaying packets.
Specifically, when a node $s$ generates a packet destined to $d$, 
the packet is routed through path $(s, v_1, v_2, \cdots, v_k, t)$, 
where every internal node $v_i, 1 \leq i \leq k,$ belongs to the virtual backbone.
To realize this idea, we can model the wireless network as a graph 
$G=(V,E)$, where $V$ is the set of nodes in the wireless network, 
and $(u,v) \in E$ if and only if $u$ and $v$ can communicate with each other directly. 
Thus, a connected dominating set of $G$ is a virtual backbone for 
the wireless network.\footnote{A set $D \subseteq V$ is a \textit{dominating set} 
of $G=(V,E)$ if every vertex in $V \setminus D$ is adjacent to $D$. 
Furthermore, if $D$ induces a connected subgraph of $G$, 
then $D$ is called a \textit{connected dominating set} of $G$.} 
One of the concerns in constructing the
virtual backbone is the routing cost. Specifically, the routing cost of 
sending a packet from the source $s$ to the destination $d$ is the 
number of internal nodes (relays) in the routing path from $s$ to $d$.
For example, the routing cost is $k$ if the routing path is 
$(s, v_1, v_2, \cdots, v_k, t)$. 
The routing cost should not be too high even if
packets are only allowed to be routed through the virtual backbone. 
Next, we give the formal definition of the problem. 

\subsection{Problem Definition}
Let $G[S]$ be the subgraph of $G=(V,E)$ induced by $S \subseteq V$.
Let $m_G(u,v)$ be the number of internal vertices on the shortest path 
between $u$ and $v$ in $G$. For example, if $u$ and $v$ are adjacent, 
then $m_G(u,v) = 0$.
If $u$ and $v$ are not adjacent and have a common neighbor, 
then $m_G(u,v) = 1$. 
Furthermore, given a vertex subset $D$ of $G$, 
$m^D_G(u,v)$ is defined as $m_{G[D \cup \{u,v\}]}(u,v)$, 
i.e., the number of internal vertices on the shortest path between $u$ 
and $v$ through $D$.
We use $n(G)$ to denote the number of vertices in graph $G$.
When the graph we are referring to is clear from the context, 
we simply write $n$, $m(u,v)$, and $m^D(u,v)$ instead of $n(G)$, $m_G(u,v)$, 
and $m^D_G(u,v)$, respectively.

\begin{definition}
Given a connected graph $G$ and a positive integer $\alpha$, 
the \textbf{Connected Dominating set problem with 
Routing cost constraint} (CDR-$\alpha$) asks for the smallest 
connected dominating set $D$ of $G$, 
such that, for every two vertices $u$ and $v$, 
if $u$ and $v$ are not adjacent in $G$, 
then $m^D(u,v) \leq \alpha \cdot m(u,v)$.
\end{definition}

\subsection{Preliminary}
\subsubsection{An Equivalent Problem}
In the CDR-$\alpha$ problem, we need to consider all the pairs of non-adjacent 
nodes. Ding \textit{et al.} discovered that to solve the CDR-$\alpha$ problem,
it suffices to consider only vertex pairs $(u,v)$ such that $m(u,v)= 1$, 
i.e., $u$ and $v$ are not adjacent but have a common neighbor~\cite{5703070}. 
We call the corresponding problem the 1-DR-$\alpha$ problem.

\begin{definition}
Given a connected graph $G = (V,E)$ and a positive integer $\alpha$, 
the 1-DR-$\alpha$ problem asks for the smallest dominating set 
$D$ of $G$, such that, for every two vertices $u$ and $v$, if $m(u,v) = 1$, 
then $m^D(u,v) \leq \alpha$.
\end{definition}
We say that $u$ and $v$ form a \textbf{target couple}, denoted by $[u,v]$, 
if $m(u,v) = 1$. We say that a set $S$ \textbf{covers} a target couple 
$[u,v]$ if $m^S(u,v) \leq \alpha$. Hence, the 1-DR-$\alpha$ problem asks for the 
smallest dominating set that covers all the target couples. Notice that any feasible 
solution of the 1-DR-$\alpha$ problem must induce a connected subgraph of $G$.
The equivalence between the CDR-$\alpha$ problem and the 1-DR-$\alpha$ 
problem is stated in the following theorem.

\begin{theorem}[Ding \textit{et al.} \cite{5703070}]
$D$ is a feasible solution of the CDR-$\alpha$ problem with input graph $G$ 
if and only if $D$ is a feasible solution of the 1-DR-$\alpha$ problem 
with input graph $G$.
\end{theorem}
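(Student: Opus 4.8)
The plan is to handle the two directions separately; the forward direction is immediate, and essentially all the work is in the converse. If $D$ is feasible for CDR-$\alpha$, then $D$ is a connected dominating set, hence a dominating set, and specializing the routing-cost constraint to a non-adjacent pair $u,v$ with $m(u,v)=1$ gives $m^D(u,v)\le\alpha\cdot 1=\alpha$; so $D$ is feasible for 1-DR-$\alpha$.

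For the converse, assume $D$ is feasible for 1-DR-$\alpha$. Then $D$ is a dominating set by definition, so it suffices to prove (i) $m^D(u,v)\le\alpha\cdot m(u,v)$ for every non-adjacent pair $u,v$, and (ii) $G[D]$ is connected. I would prove (i) first, by induction on $m(u,v)$, and then obtain (ii) for free: for non-adjacent $u,v\in D$, part (i) produces a finite $u$--$v$ path all of whose internal vertices lie in $D$, and since $u,v\in D$ as well this path lies inside $G[D]$; adjacent pairs in $D$ are joined by an edge of $G[D]$.

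For (i), the base case $m(u,v)=1$ is precisely the hypothesis that $D$ covers the target couple $[u,v]$. For $m(u,v)=d\ge2$, fix a shortest $u$--$v$ path $u=x_0,x_1,\dots,x_k=v$ with $k=d+1\ge3$. Since this path is shortest, $x_0\not\sim x_2$, while $x_1$ is a common neighbor, so $[x_0,x_2]$ is a target couple and there is a path $x_0,e_1,\dots,e_j,x_2$ in $G[D\cup\{x_0,x_2\}]$ with $e_1,\dots,e_j\in D$ and $1\le j\le\alpha$ (here $j\ge1$ because $x_0\not\sim x_2$). Put $d^\star:=e_j\in D$; being an internal vertex of that path, $d^\star\notin\{x_0,x_2\}$, and $d^\star\sim x_2$. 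If $d^\star=v$, then $x_0,e_1,\dots,e_{j-1},v$ is a $u$--$v$ path through $D$ with at most $j-1\le\alpha-1\le\alpha(k-1)$ internal vertices, and we are done. Otherwise $d^\star,x_2,x_3,\dots,x_k$ is a $d^\star$--$v$ walk of length $k-1$, so $m(d^\star,v)\le k-2<d=m(u,v)$, and the induction hypothesis gives $m^D(d^\star,v)\le\alpha(k-2)$. Concatenating $x_0,e_1,\dots,e_j$ with a witnessing $d^\star$--$v$ path through $D$ gives a $u$--$v$ walk, all of whose internal vertices lie in $D$, with at most $j+\alpha(k-2)\le\alpha+\alpha(k-2)=\alpha(k-1)=\alpha\cdot m(u,v)$ internal vertices; shortcutting this walk to a simple path only lowers the count, so $m^D(u,v)\le\alpha\cdot m(u,v)$.

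The delicate point --- and the one I expect to require the most care --- is that every intermediate vertex of the constructed path must lie in $D$: the naive route $u\to x_2\to v$ fails because $x_2$ itself need not be in $D$, which is why the argument pivots to a $D$-neighbor $d^\star$ of $x_2$ and recurses from there. Once that pivot is in place, the remaining obligations are routine: checking that $m(d^\star,v)<m(u,v)$ so that the induction is well-founded, and verifying that the arithmetic $j+\alpha(k-2)\le\alpha(k-1)$ closes, which uses only $j\le\alpha$ and $\alpha\ge1$.
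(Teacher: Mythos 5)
Your argument is correct, but note that the paper does not prove this statement at all: it is imported verbatim from Ding \emph{et al.}~\cite{5703070}, so there is no in-paper proof to compare against. Your write-up is a legitimate self-contained proof, and it follows the natural route one would expect (and essentially the route of the cited work): the forward direction is a specialization, and the converse is an induction on $m(u,v)$ in which the covering of the target couple $[x_0,x_2]$ lets you pivot from $u$ to a vertex $d^\star\in D$ adjacent to $x_2$ and recurse, with connectivity of $G[D]$ falling out as a corollary of the distance bound applied to pairs inside $D$. Two small points you should make explicit rather than leave implicit: (a) when you invoke ``the induction hypothesis'' for the pair $(d^\star,v)$, that pair may be adjacent (then $m^D(d^\star,v)=0$) or may satisfy $m(d^\star,v)=1$ (then it is the base hypothesis, not the inductive one); both cases still give $m^D(d^\star,v)\le\alpha\,m(d^\star,v)\le\alpha(k-2)$, so nothing breaks, but the case split deserves a sentence. (b) When you shortcut the concatenated walk to a simple path, you should note that every vertex of the walk other than $u$ and $v$ lies in $D$ (the first segment avoids $x_2$, and the second segment's vertices lie in $D\cup\{v\}$), so the extracted path really lives in $G[D\cup\{u,v\}]$ and its internal-vertex count is at most $j+\alpha(k-2)\le\alpha\,m(u,v)$. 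With those clarifications the proof is complete.
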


\begin{corollary}
Any $r$-approximation algorithm of the 1-DR-$\alpha$ problem 
is an $r$-approximation algorithm of the CDR-$\alpha$ problem.
\end{corollary}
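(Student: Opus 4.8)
The plan is to obtain the corollary directly from the equivalence theorem of Ding \textit{et al.}, since that theorem already carries all the substance. The key observation is that the CDR-$\alpha$ problem and the 1-DR-$\alpha$ problem share the same objective function — both ask for the \emph{smallest} such vertex set $D$ — and, by the theorem, for any fixed input graph $G$ they have exactly the same set of feasible solutions. Hence they have a common optimal value; write $\mathrm{OPT}(G)$ for it, and note $\mathrm{OPT}(G)$ is simultaneously the optimum of CDR-$\alpha$ and of 1-DR-$\alpha$ on $G$.

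First I would fix an arbitrary input graph $G$ and an $r$-approximation algorithm $\mathcal{A}$ for the 1-DR-$\alpha$ problem. Running $\mathcal{A}$ on $G$ yields, in polynomial time, a set $D$ that is feasible for 1-DR-$\alpha$ on $G$ and satisfies $|D| \le r\cdot \mathrm{OPT}(G)$, where I have used that the 1-DR-$\alpha$ optimum equals $\mathrm{OPT}(G)$.

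Next, by the ``only if'' direction of the theorem — a feasible solution of 1-DR-$\alpha$ is a feasible solution of CDR-$\alpha$ — the set $D$ is feasible for CDR-$\alpha$ on $G$ as well. Combining this with $|D|\le r\cdot\mathrm{OPT}(G)$ and the fact that $\mathrm{OPT}(G)$ is also the CDR-$\alpha$ optimum on $G$, we conclude that $\mathcal{A}$ returns a feasible CDR-$\alpha$ solution whose size is within a factor $r$ of optimal, in polynomial time. Since $G$ was arbitrary, $\mathcal{A}$ is an $r$-approximation algorithm for CDR-$\alpha$.

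I do not expect any real obstacle: the proof is a two-line consequence of the equivalence theorem, and the only point requiring the slightest care is verifying that the two problems' objective functions literally coincide (minimizing $|D|$), so that identical feasible regions force identical optimal values — everything nontrivial lives in the equivalence theorem rather than in the corollary.
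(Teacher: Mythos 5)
Your proof is correct and follows exactly the route the paper intends: the corollary is treated there as an immediate consequence of the equivalence theorem (identical feasible sets, identical objective, hence identical optima), which is precisely what you spell out.
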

In this paper, we thus focus on the 1-DR-$\alpha$ problem.

\subsubsection{Feasibility of the 1-DR-$\alpha$ Problem for $\alpha \geq 5$}
Next, we give the basic idea of finding a feasible solution of
the 1-DR-$\alpha$ problem for $\alpha \geq 5$ 
used in previous researches, e.g., in~\cite{Du2010}.
One of our algorithms still uses this idea.
First, find a dominating set $D$. 
Thus, for any target couple $[u,v]$, there exist $u^d$ and $v^d$ in $D$, 
such that $u^d$ and $v^d$ dominate $u$ and $v$, respectively.\footnote{$u^d$ dominates 
$u$ if $u^d = u$ or $u^d$ and $u$ are adjacent.}
Let $D' = D$. For any two vertices $u'$ and $v'$ in $D$, 
if $m(u',v') \leq 3$, then we add the $m(u',v')$ internal vertices 
of the shortest path between $u'$ and $v'$ on $G$ to $D'$.
Observe that $m(u^d,v^d) \leq 3$. 
Hence, $m^{D'}(u,v) \leq 5$ and $D'$ is a feasible solution of the 
1-DR-$\alpha$ problem for $\alpha \geq 5$.

\begin{lemma}
\label{idea}
Let $D$ be a dominating set of $G$. 
Let $D' \supseteq D$ be a vertex subset of $G$ such that, 
for any two vertices $u'$ and $v'$ in $D$, 
if $m(u',v') \leq 3$, then $m^{D'}(u',v') \leq 3$.
Then, $D'$ is a feasible solution of the 1-DR-$\alpha$ 
problem with input $G$ and $\alpha \geq 5$. 
\end{lemma}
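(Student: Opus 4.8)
The plan is to verify directly that $D'$ satisfies the two requirements of a feasible solution of the 1-DR-$\alpha$ problem for $\alpha \geq 5$: (i) $D'$ is a dominating set of $G$, and (ii) for every target couple $[u,v]$, we have $m^{D'}(u,v) \leq \alpha$, for which it suffices to show $m^{D'}(u,v) \leq 5$ since $\alpha \geq 5$. Requirement (i) is immediate: $D' \supseteq D$ and $D$ is already a dominating set, so $D'$ dominates $G$ as well.

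For requirement (ii), fix an arbitrary target couple $[u,v]$, so $m(u,v) = 1$ and $u,v$ are non-adjacent with a common neighbor. First I would use the fact that $D$ is a dominating set to pick $u^d, v^d \in D$ with $u^d$ dominating $u$ and $v^d$ dominating $v$ (allowing $u^d = u$ if $u \in D$, and similarly for $v$). The key structural observation is that $m(u^d, v^d) \leq 3$: indeed, $u$ and $v$ share a common neighbor $w$, so the walk $u^d, u, w, v, v^d$ connects $u^d$ to $v^d$ using at most $3$ internal vertices ($u$, $w$, $v$), hence the shortest $u^d$–$v^d$ path in $G$ has at most $3$ internal vertices. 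Since $u^d, v^d \in D$ and $m(u^d, v^d) \leq 3$, the hypothesis on $D'$ gives $m^{D'}(u^d, v^d) \leq 3$, i.e., there is a path from $u^d$ to $v^d$ in $G[D' \cup \{u^d, v^d\}] = G[D']$ with at most $3$ internal vertices, all lying in $D'$.

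To finish, I would prepend $u$ and append $v$ to this path. Since $u^d$ dominates $u$ and $v^d$ dominates $v$, the concatenation yields a walk from $u$ to $v$ in $G[D' \cup \{u,v\}]$ whose internal vertices are $u^d$, the (at most $3$) internal vertices of the $u^d$–$v^d$ path, and $v^d$ — at most $5$ vertices in total, all in $D'$. (Minor degenerate cases, such as $u^d = u$ or $u^d = v^d$, only decrease this count and are handled by the same argument.) Therefore $m^{D'}(u,v) \leq 5 \leq \alpha$, and since $[u,v]$ was arbitrary, $D'$ covers all target couples and is feasible.

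I do not anticipate a genuine obstacle here; the lemma is essentially a bookkeeping argument. The only point requiring care is making the walk-to-path reductions rigorous — arguing that a walk with $k$ internal vertices implies a shortest path with at most $k$ internal vertices — and confirming that all intermediate vertices we introduce ($u^d$, $v^d$, and the path's interior) indeed belong to $D'$, which they do because $D \subseteq D'$ and the hypothesis places the interior of the $u^d$–$v^d$ path inside $D'$.
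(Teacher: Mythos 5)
Your proof is correct and follows essentially the same route the paper sketches just before the lemma: choose dominators $u^d,v^d\in D$ of $u,v$, observe $m(u^d,v^d)\le 3$ via the common neighbor, invoke the hypothesis to get a short $u^d$--$v^d$ path inside $D'$, and extend it by $u$ and $v$ to bound $m^{D'}(u,v)\le 5\le\alpha$. No gaps; the added care about degenerate cases and walk-to-path reduction is fine but not a departure from the paper's argument.
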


\subsection{Previous Result}
\paragraph*{Previous result on general graphs}
When $\alpha = 1$, the 1-DR-$\alpha$ problem can be transformed 
to the set cover problem, i.e., cover all the vertices (to form a dominating set) 
and cover all the target couples. Observe that each target couple can be covered by 
a single vertex. The resulting approximation ratio is 
$O(\log n)$~\cite{5703070}. When $\alpha$ is sufficiently large, 
e.g., $\alpha \geq n$, any connected dominating set is feasible for the 
CDR-$\alpha$ problem. Note that, for any $\alpha$, the size of the minimum 
connected dominating set is a lower bound of the CDR-$\alpha$ problem. 
Since the connected dominating set can be approximated 
within a factor of $O(\log n)$~\cite{Guha1998, RUAN2004325}, 
the CDR-$n$ problem can be 
approximated within a factor of $O(\log n)$. 
If $\alpha$ falls between these two extremes, 
e.g., $\alpha = 2$, the only known previous result is the trivial 
$O(n)$-approximation algorithm. On the hardness side, 
it has been proved that, unless $NP \subseteq DTIME(n^{\log \log n})$, 
there is no polynomial-time algorithm that can approximate 
the CDR-$\alpha$ problem within 
a factor of $\rho \ln \delta$ ($\forall \rho<1$) for 
$\alpha = 1$~\cite{5703070} and $\alpha \geq 2$~\cite{Du2013, 6216366}, 
where $\delta$ is the maximum degree of $G$.

\begin{open}[Du and Wan~\cite{Du2013}]
Is there a polynomial-time $O(\log n)$-approximation algorithm for the 
CDR-$\alpha$ problem for $\alpha \geq 2$?
\end{open}

\paragraph*{Previous result on Unit Disk Graph (UDG)}
Most of the studies on the CDR-$\alpha$ problem focused on UDG~\cite{5703070, 
Du2013, 6216366, Du2010, 7524455}. 
UDG exhibits many nice properties that 
enable constant factor approximation algorithms (or PTAS) 
in many problems where only 
$O(\log n)$-approximation algorithms (or worse) are known in general graphs, 
e.g., the minimum (connected) dominating set problem and the maximum 
independent set problem~\cite{NET:NET10097, DAS2015439, Nieberg2006}. 
All the previous research on the CDR-$\alpha$ problem on UDG
leveraged constant bounds of the maximum independent set or 
the minimum dominating set.
However, all the previous research only solved the case where 
$\alpha \geq 5$ (by Lemma~\ref{idea}),
and the best result so far is a PTAS by Du~\textit{et al.}~\cite{Du2010}.
When $1 < \alpha < 5$, the only known previous result 
is the trivial $O(n)$-approximation algorithm.

\subsection{Our Result and Basic Ideas}
In this paper, we first give an approximation algorithm of the 
1-DR-$\alpha$ problem on general graphs for any constant $\alpha > 1$. 
A critical observation is that the 1-DR-$2$ problem is a special 
case of the Set Cover with Pairs (SCP) problem~\cite{Hassin2005}. 
Hassin and Segev
proposed an $O(\sqrt{t\log t})$-approximation algorithm for the SCP problem, 
where $t$ is the number of targets to be covered. 
However, since there are $O(n^2)$ target couples to be covered, 
directly applying the $O(\sqrt{t\log t})$-approximation bound 
yields a trivial upper bound for the 1-DR-2 problem. 
We re-examine the analysis in~\cite{Hassin2005} and 
find that, when applying the algorithm to the 1-DR-2 problem, 
the approximation ratio can also be expressed as $O(\sqrt{n\log n})$. 
Nevertheless, in this paper, we give a slightly simplified algorithm 
with an easier analysis for the SCP problem. 
The algorithm and analysis also make it easy to solve the generalized SCP problem.
We obtain the following result, which is the first non-trivial result of the 
CDR-$\alpha$ problem for $\alpha > 1$ 
on general graphs and for $1 < \alpha < 5$ on UDG. 
\begin{theorem}\label{thrm: 1st}
For any constant $\alpha > 1$, there is an
$O(n^{1-\frac{1}{\alpha}}(\log n)^{\frac{1}{\alpha}})$-approximation 
algorithm for the 1-DR-$\alpha$ problem. 
\end{theorem}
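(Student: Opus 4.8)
The plan is to reduce the 1-DR-$\alpha$ problem to a covering problem and then play off two regimes depending on how "spread out" the optimal solution is. First I would fix an optimal solution $D^*$ of the 1-DR-$\alpha$ problem with $|D^*| = \mathrm{OPT}$. The key structural fact to exploit is that for every target couple $[u,v]$, the set $D^*$ contains a path of at most $\alpha$ internal vertices connecting $u$ and $v$. I would "guess" the size of OPT by trying all values $k = 1, 2, \ldots, n$ (this only costs a polynomial factor), so in what follows assume $|D^*| = k$ is known.

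The heart of the argument is a threshold trick parametrized by a value $\tau$ (to be optimized at the end as roughly $\tau \approx (k \log n)^{1/\alpha} \cdot n^{?}$ — the exact choice is the routine calculation I'd defer). Call a vertex $w$ \emph{heavy} if it lies on short ($\le \alpha$ internal vertices) $D^*$-paths for more than $\tau$ target couples, and \emph{light} otherwise. First I would handle the couples that are covered \emph{through} a light internal vertex: each such couple $[u,v]$ has a witnessing path whose internal vertices are light, and I would set up a Set-Cover-with-Pairs-style instance where picking a single vertex or a small bounded-length path covers all couples routed through it; since every relevant path has $\le \alpha$ internal vertices and each light vertex is "responsible" for at most $\tau$ couples, a greedy / LP-rounding argument in the spirit of Hassin--Segev yields a solution of size $\tilde O(\mathrm{OPT} \cdot \tau)$ — more precisely the number of "path-gadgets" needed scales like OPT, each contributing $O(\alpha)$ vertices, but the covering overhead from having many couples is throttled by $\tau$. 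For the couples routed only through heavy vertices: there are at most $O(n^2 / \tau)$ heavy vertices worth considering — no wait, better: the total number of (couple, witnessing-path) incidences through heavy vertices is at most $n^2$ but each heavy vertex kills $> \tau$ couples, so $O(n^2/\tau)$ heavy vertices suffice to cover all such couples, and I can just add all of them (or greedily pick $O((n^2/\tau)\log n)$ of them via set cover). Balancing $\mathrm{OPT}\cdot\tau$ against $n^2/\tau$ (up to $\log$ factors and the $\alpha$-dependence hidden in the path lengths) is where the exponent $1 - 1/\alpha$ will emerge; the $\alpha$-th root appears because a witnessing path has up to $\alpha$ vertices, so the "cost per couple covered" in the light regime is effectively $\tau^{1/\alpha}$-like rather than $\tau$-like — this is the crucial refinement over the naive $\sqrt{n}$ bound and I'd want to track it carefully.

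Finally I would verify feasibility: the union $D$ of (i) a base dominating set computed by the standard greedy $O(\log n)$-approximation (needed because 1-DR-$\alpha$ demands a dominating set, and its size is $O(\mathrm{OPT}\log n)$, which is absorbed), (ii) the light-regime path-gadgets, and (iii) the heavy-vertex cover, covers every target couple with $m^D(u,v) \le \alpha$ by construction, and connectivity follows since any feasible 1-DR-$\alpha$ solution automatically induces a connected subgraph (noted in the excerpt). Summing the three contributions and optimizing $\tau$ gives the claimed $O(n^{1-1/\alpha}(\log n)^{1/\alpha})$ bound.

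The main obstacle I anticipate is making the light-regime covering argument genuinely exploit the bound $\alpha$ on path length so that the per-couple cost is $\tau^{1/\alpha}$ rather than $\tau$: a bounded-length path through light vertices still has $\alpha$ internal slots, and naively each slot could force a separate covering decision. The right move is probably to observe that fixing the \emph{first} light internal vertex on the witnessing path already restricts the couple to a set of size $\le \tau$, then recurse on the remaining suffix of length $\alpha - 1$ — an $\alpha$-fold iterated bucketing that contributes the $\alpha$-th root. Getting the recursion's accounting tight (and confirming it degrades gracefully to the Hassin--Segev $\sqrt{n\log n}$ bound at $\alpha = 2$, matching the remark in the introduction) is the delicate part; everything else is standard greedy set cover plus an optimization over $\tau$.
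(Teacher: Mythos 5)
Your proposal does not establish the theorem: the two places where the exponent $1-\frac{1}{\alpha}$ and the factor $(\log n)^{1/\alpha}$ would have to emerge are exactly the places you defer. Concretely, the heavy-vertex half is broken for $\alpha\ge 2$: a vertex that lies on witnessing $D^*$-paths of more than $\tau$ couples does not by itself cover those couples (a single vertex covers a couple only when it is a common neighbor of both endpoints), so adding, or greedily selecting, $O(n^2/\tau)$ such vertices does not make the heavy-routed couples covered --- you would need entire paths, and in any case heaviness is defined relative to the unknown $D^*$, so the algorithm cannot identify these vertices. Moreover, even granting both halves as you state them, balancing $\tilde O(\mathrm{OPT}\cdot\tau)$ against $O(n^2/\tau)$ gives a ratio of order $n/\sqrt{\mathrm{OPT}}$, which is trivial when $\mathrm{OPT}$ is small; the ``$\alpha$-fold iterated bucketing'' that is supposed to turn the per-couple cost $\tau$ into something like $\tau^{1/\alpha}$ is only gestured at, and it is precisely the missing idea, not a routine calculation.

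For comparison, the paper needs no heavy/light decomposition at all. It casts 1-DR-$\alpha$ as set cover with $\alpha$-tuples: the targets are the $n$ vertices (to force domination) together with the $t=O(n^2)$ target couples, the elements are the vertices of $G$, and a set of at most $\alpha$ vertices covers whatever it dominates or connects within $\alpha$ internal vertices. The algorithm is the greedy that in each round adds at most $\alpha$ vertices maximizing the number of newly covered targets. Since some $\alpha$-subset of the optimum covers at least a $1/\binom{\mathrm{OPT}}{\alpha}$ fraction of the uncovered targets, all targets are covered after $O(\mathrm{OPT}^{\alpha}\ln t)$ rounds, so $ALG=O(\alpha\,\mathrm{OPT}^{\alpha}\ln t)$, while trivially $ALG\le n$. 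The claimed ratio then comes from a case split on $\mathrm{OPT}$ against a threshold $n^{c}$ with $c=\frac{1}{\alpha}-\frac{\ln\ln(t^{\alpha})}{\alpha\ln n}$, which equalizes $n^{1-c}$ and $\sqrt{n\cdot\alpha (n^{c})^{\alpha-2}\ln t}$ at $n^{1-\frac{1}{\alpha}}(\alpha\ln t)^{\frac{1}{\alpha}}$; the $\alpha$-th root is produced by this balancing, not by any structural analysis of witnessing paths. If you wish to pursue your outline you must actually supply the iterated-bucketing lemma and repair the heavy case, but the greedy-plus-balancing route is both simpler and complete.
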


Apparently, the above performance guarantee deteriorates quickly 
as $\alpha$ increases.
In our second algorithm, we apply the aforementioned idea of finding a 
feasible solution when $\alpha \geq 5$, i.e., Lemma~\ref{idea}. 
We have the following result.
\begin{theorem}\label{thrm: 2nd}
When $\alpha \geq 5$, 
there is an $O(\sqrt{n}\log n)$-approximation 
algorithm for the 1-DR-$\alpha$ problem. 
\end{theorem}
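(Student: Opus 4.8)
The plan is to combine Lemma~\ref{idea} with a \textsc{Set Cover with Pairs}--style covering argument, in the spirit of the algorithm behind Theorem~\ref{thrm: 1st}, together with a simple win--win on the optimum. Fix an optimal solution $R$ of the 1-DR-$\alpha$ instance and set $OPT=|R|$; recall $OPT\ge\gamma(G)$, the minimum dominating set size, and that $V$ is always a feasible solution of size $n$. First dispose of the easy regime: if $OPT\ge\sqrt{n}/\log n$, output $V$; this is feasible and has size $n\le\sqrt{n}\log n\cdot OPT$, so we are done. Hence we may assume $OPT<\sqrt{n}/\log n$, and in particular $\gamma(G)<\sqrt{n}/\log n$.

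Next, run the greedy $O(\log n)$-approximation for \textsc{Dominating Set} to obtain a dominating set $D$ with $|D|=O(\log n)\cdot\gamma(G)=O(\sqrt{n})$, so that the number of ``close pairs'' $\mathcal{P}=\{[u',v']:u',v'\in D,\ 1\le m(u',v')\le 3\}$ is at most $|D|^2=O(n)$. By Lemma~\ref{idea} it now suffices to add vertices to $D$ so that every close pair $[u',v']$ gets $m^{D'}(u',v')\le 3$, while keeping $|D'|=O(\sqrt{n}\log n)\cdot OPT$. I would phrase the choice of $D'\setminus D$ as a covering problem: a close pair with $m(u',v')=k$ is ``satisfied'' by inserting the $k\le 3$ internal vertices of any $u'$--$v'$ path of length at most $4$; a pair with $k=1$ is satisfied by a single common neighbour, a pair with $k=2$ by two adjacent vertices, and only pairs with $k=3$ genuinely require a triple. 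Restricted to the $k\in\{1,2\}$ pairs this is literally an instance of \textsc{Set Cover with Pairs} with $O(n)$ targets, so I would run the threshold-greedy used for Theorem~\ref{thrm: 1st}: repeatedly add the cheapest ``move'' (a vertex or an edge) satisfying at least $\sqrt{n}$ still-unsatisfied close pairs; since $|\mathcal{P}|=O(n)$, this heavy phase runs $O(\sqrt{n})$ times and adds only $O(\sqrt{n})=O(\sqrt{n}\log n\cdot OPT)$ vertices. Once no such move remains, every vertex and edge is ``light'', and the residual $k=1$ instance is finished exactly as in the analysis of Theorem~\ref{thrm: 1st}, using the fact that $R$ (being feasible) satisfies every target couple, hence every $k=1$ close pair, with at most $\alpha=O(1)$ internal vertices; this contributes a further $O(\sqrt{n}\log n)\cdot OPT$ vertices.

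The step I expect to be the real obstacle is the pairs with $m(u',v')=3$: their shortest $u'$--$v'$ paths have length exactly $4$, so they genuinely require three new vertices, and treating them as triples in the covering problem would only give the \textsc{Set Cover}-with-triples guarantee $O(|\mathcal{P}|^{2/3})=O(n^{2/3})$, which is too weak. Moreover, unlike the $k=1$ case, $R$ gives \emph{no} bound on $m^{R}(u',v')$ for $k\in\{2,3\}$ pairs, so the light-phase argument cannot invoke $R$ there. The way around this that I would try is to handle the $k=2,3$ pairs by exploiting that $D$ is a dominating set: for $u',v'\in D$ at distance $3$ or $4$, route a shortest path through dominators $a^{d},b^{d}\in D$ of its middle vertices, which reduces satisfying $[u',v']$ to satisfying a bounded number of pairs among $D$ that are themselves at distance at most $2$ (hence fall back into the vertex/edge framework), giving a clean reduction to the \textsc{Set Cover with Pairs} instance above with only a constant loss in the number of moves. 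Carrying out this reduction carefully -- and in particular verifying that the optimum of the resulting covering instance is $O(\log n)\cdot OPT$ rather than merely $O(|D|)$ -- is where the bulk of the technical work lies; everything else reuses the bookkeeping already set up for Theorem~\ref{thrm: 1st}.
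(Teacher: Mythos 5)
There is a genuine gap, and it sits exactly where you suspect: the pairs $u',v'\in D$ with $m(u',v')\in\{2,3\}$ are never actually handled, and the two devices you lean on do not close the hole. First, your ``light phase'' for $k=1$ pairs appeals to the optimal solution $R$ covering every target couple with at most $\alpha$ internal vertices; but Lemma~\ref{idea} needs $m^{D'}(u',v')\le 3$ for the dominator pairs, and for $\alpha=5$ a guarantee of $m^{D'}(u^d,v^d)\le\alpha$ only yields $m^{D'}(u,v)\le\alpha+2>\alpha$, so bounding the covering optimum by $|R|$ under that relaxed coverage notion is inconsistent with the feasibility argument. Second, the proposed rerouting of a $k\in\{2,3\}$ pair through dominators $a^d,b^d$ of the middle vertices does not reduce to distance-at-most-$2$ subproblems: $a^d$ and $b^d$ can themselves be at distance up to $4$, and even when all segments are short the resulting $u'$--$v'$ walk through $a^d,b^d$ can have five internal vertices, again violating the ``$\le 3$'' requirement of Lemma~\ref{idea}. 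You acknowledge this is where ``the bulk of the technical work lies,'' but that work is precisely what is missing, so the proof is incomplete as written.

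The larger point is that no covering machinery is needed at all: the paper's proof is just the brute-force completion of $D$. Take any $O(\log n)$-approximate dominating set $D$, so $|D|=O(\log n\cdot OPT_{DS})=O(\log n\cdot OPT)$, and for \emph{every} pair $u',v'\in D$ with $m(u',v')\le 3$ add the at most three internal vertices of a shortest $u'$--$v'$ path of $G$ (these vertices are freely available in $G$; there is no need to choose them cleverly). Then $|D'|\le |D|+3\binom{|D|}{2}=O\bigl((\log n\cdot OPT)^2\bigr)$, and since also $|D'|\le n$, the inequality $\min(a,b)\le\sqrt{ab}$ gives $|D'|=O(\sqrt{n}\log n)\cdot OPT$, with feasibility from Lemma~\ref{idea}. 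Note that even inside your own win--win regime $OPT<\sqrt{n}/\log n$ this brute force already yields $O(\log^2 n\cdot OPT^2)=O(\sqrt{n}\log n\cdot OPT)$, so the threshold-greedy phases, the SCP reduction, and the problematic treatment of $k=2,3$ pairs are all unnecessary; the quadratic blow-up you were trying to avoid is harmless because it is quadratic in $O(\log n\cdot OPT)$, not in $n$.
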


Finally, we answer Open Question 1 negatively.
We improve upon the $\Omega(\log n)$ hardness result  
for the 1-DR-2 problem 
(albeit under a stronger hardness assumption)~\cite{Du2013, 6216366}.
In this paper, we give a reduction from the MIN-REP problem~\cite{Kortsarz2001}. 

\begin{theorem}
\label{thrm: inapprox}
Unless $NP \subseteq DTIME(n^{poly\log n})$, for any constant 
$\epsilon > 0$, the 1-DR-2 problem admits 
no polynomial-time
$2^{\log^{1-\epsilon}n}$-approximation algorithm,
even if the graph is triangle-free\footnote{If the graph is triangle-free, 
then any two vertices with a common neighbor form a target couple.} 
or the constraint that the feasible 
solution must be a dominating set is 
ignored\footnote{One may drop the 
constraint that the solution must be a 
dominating set, and focuses on minimizing the number of vertices to 
cover all the target couples. This theorem also applies to 
such a problem.}.
\end{theorem}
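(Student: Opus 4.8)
The plan is to reduce from the MIN-REP problem, whose standard hardness (under $NP \not\subseteq DTIME(n^{poly\log n})$) is that it admits no polynomial-time $2^{\log^{1-\epsilon}n}$-approximation. Recall a MIN-REP instance: a bipartite-structured graph whose vertex set is partitioned into groups $A_1,\dots,A_k$ on one side and $B_1,\dots,B_\ell$ on the other, together with a set of ``superedges'' between group-pairs $(A_i,B_j)$; a superedge is \emph{covered} by a set $S$ of chosen vertices if $S$ contains some $a\in A_i$ and some $b\in B_j$ with $(a,b)$ an actual edge. The goal is the minimum number of chosen vertices covering all superedges. The key feature I would exploit is that MIN-REP has a YES/NO gap: either there is a ``perfect'' solution of size $k+\ell$ (one representative per group), or every solution has size $\geq (k+\ell)\cdot g$ for $g = 2^{\log^{1-\epsilon}N}$.

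First I would build the 1-DR-2 instance $G$ as a triangle-free graph in which ``covering a target couple at distance $2$ by a path of $\le 2$ internal vertices'' encodes ``a superedge is witnessed by a real edge.'' The natural gadget: for each superedge $e=(A_i,B_j)$, create a target couple $[u_e, v_e]$ realized by attaching $u_e$ and $v_e$ to a common neighbor (so $m(u_e,v_e)=1$); make $u_e$ adjacent to a dedicated copy of each vertex $a\in A_i$ and $v_e$ adjacent to a dedicated copy of each $b\in B_j$, while for each real edge $(a,b)$ of the MIN-REP instance we route a length-3 path $u_e - a - b - v_e$ (contributing exactly two internal vertices $a,b$). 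Then $m^D(u_e,v_e)\le 2$ forces $D$ to contain both endpoints $a,b$ of some real edge inside the superedge's groups — exactly the MIN-REP covering condition. I would use separate vertex copies per superedge for the $u_e$/$v_e$ attachment points but \emph{shared} vertices for the group elements $a\in A_i$, $b\in B_j$ (with extra subdivisions to kill triangles and to ensure no shorter detour appears), so that the 1-DR-2 solution size is governed by how many group-representatives are chosen. Care is needed to verify triangle-freeness and, more importantly, that no target couple $[u_e,v_e]$ can be covered ``accidentally'' via some other path of length $\le 3$ through vertices serving a different superedge; padding the gadgets with subdivision vertices and making the $u_e,v_e$ high-degree to their own private neighbors should isolate each couple's covering options.

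Next I would handle the dominating-set requirement and the two alternative forms stated in the footnotes. For the ``ignore the dominating set constraint'' version the construction is cleaner — I just need cover-all-target-couples to coincide with cover-all-superedges. For the full 1-DR-2 problem I would attach a pendant/private-neighbor structure so that dominating the auxiliary vertices costs only $O(k+\ell)$ extra (a low-order additive term relative to the $(k+\ell)g$ gap), or arrange that the naturally-forced vertices already dominate everything; since the gap $g$ is super-polylogarithmic and the overhead is linear in $k+\ell$, the gap survives. The final accounting: a perfect MIN-REP solution of size $k+\ell$ yields a 1-DR-2 solution of size $O((k+\ell)+(\#\text{superedges}))$ using private subdivision vertices that must be included regardless, call it $\mathrm{OPT}_{\text{yes}}$; whereas in the NO case any feasible $D$ induces a MIN-REP solution of size $\ge (k+\ell)g$, forcing $|D| \ge \Omega((k+\ell)g)$. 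Choosing the gadget sizes so that $\mathrm{OPT}_{\text{yes}} = \mathrm{poly}(N)$ and the instance size $n = \mathrm{poly}(N)$, a $2^{\log^{1-\epsilon}n}$-approximation for 1-DR-2 would distinguish the two MIN-REP cases, contradicting the assumption.

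The main obstacle I expect is the ``no accidental covering'' analysis: I must rule out that a target couple $[u_e,v_e]$ gets covered by a $\le 2$-internal-vertex path that passes through vertices intended for other superedges or through dominating-set filler, because such shortcuts would let a small $D$ be feasible in the NO case and destroy the reduction. The remedy is a careful girth/distance argument — using long subdivisions so that any path between $u_e$ and $v_e$ not going through $e$'s own real-edge gadget has length $\ge 4$ — together with a clean proof that the only length-3 $u_e$–$v_e$ paths are exactly the real edges $(a,b)$ with $a\in A_i, b\in B_j$. A secondary nuisance is keeping the whole graph triangle-free while still realizing $m(u_e,v_e)=1$ (which needs a genuine common neighbor), and simultaneously ensuring that common neighbor cannot itself serve as an internal vertex that trivially covers the couple; this is resolved by letting the common neighbor be a single low-degree vertex whose inclusion in $D$ would cover \emph{only} that one couple, so it behaves like a ``unit set'' and does not help amortize the MIN-REP cost.
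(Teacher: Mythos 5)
You reduce from MIN-REP with essentially the same gadget as the paper (one target couple per superedge, realized by a private common neighbor, whose coverage with at most two internal vertices forces a genuine edge with endpoints in the two groups), so the skeleton is right; but the plan leaves unresolved exactly the point where the real construction does its work. Your per-superedge vertices $u_e$, $v_e$, $w_e$ (plus the proposed pendant/subdivision vertices) number $\Theta(\#\text{superedges})$ and are, by your own isolation requirement, private and low-degree; in the full 1-DR-2 problem every feasible solution must dominate them, so every solution --- including the YES case --- already contains $\Omega(\#\text{superedges})$ vertices. Your own YES-case estimate $O((k+\ell)+\#\text{superedges})$ concedes this additive term, but you never bound it against $(k+\ell)\cdot 2^{\log^{1-\epsilon}N}$; in the hard MIN-REP instances the number of superedges exceeds $k+\ell$ by a factor that is itself superpolylogarithmic and comparable to the gap, so this term can swallow the gap, and the claim that the domination overhead is ``linear in $k+\ell$'' is inconsistent with the private-gadget design you describe. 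The paper sidesteps this entirely by never forcing gadget vertices into the solution: four shared hubs (with dummy leaves) dominate \emph{everything} and cover every incidental couple at total cost $4$, and their adjacencies (a 4-cycle in which $h_{PX}$ and $h_{PY}$ are non-adjacent, with the $X/Y$/relay hubs kept away from $PX\cup PY$) are chosen precisely so that no path with at most two internal vertices is created between any $px$ and $py$. That is the tension you yourself flag (cheap domination versus ``no accidental covering''), and your remedies do not resolve it: subdivisions cannot be inserted inside the distance-3 core gadget without breaking the intended coverage, and they only increase the domination burden.

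Two further gaps. First, you never address the incidental target couples your construction creates (e.g., $[a,a']$ for $a,a'$ in the same group, both adjacent to $u_e$; $[u_e,u_{e'}]$; couples among the auxiliary vertices); in 1-DR-2 all of these must be covered, and verifying that the cheap shared structure covers them without creating shortcuts is exactly the content of the paper's lemma that $H$ covers every couple outside $[PX,PY]$. Second, your treatment of the common-neighbor (relay) issue --- ``it covers only one couple, hence cannot amortize'' --- is an assertion rather than an argument; it can be repaired by replacing each chosen common neighbor with the two endpoints of any real edge of that superedge (losing only a factor $2$ in the gap), but as written it is incomplete. The paper instead duplicates the couple and the relay (two copies per superedge), which makes relays exactly cost-neutral and yields the clean correspondence that the 1-DR-2 optimum equals the MIN-REP optimum plus $4$.
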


\subsection{Relation with the Basic $k$-Spanner Problem}
When we ignore the constraint that any feasible solution must be a 
connected dominating set, 
the CDR-$\alpha$ problem is similar to the basic $k$-spanner problem.
For completeness, we give the formal definition of the basic $k$-spanner problem.
Given a graph $G = (V,E)$, a $k$-spanner of $G$ is a subgraph $H$ of $G$ 
such that $d_H(u,v) \leq kd_G(u,v)$ for all $u$ and $v$ in $V$, 
where $d_G(u,v)$ is the number of edges in the shortest path between 
$u$ and $v$ in $G$. 
The basic $k$-spanner problem asks for the $k$-spanner that has the fewest 
edges. The CDR-$\alpha$ problem differs with the 
basic $k$-spanner problem in the following three aspects: First,
in the CDR-$\alpha$ problem, we find a set of vertices $D$, 
and all the edges in the subgraph induced by $D$ can be used for routing; 
while in the basic $k$-spanner problem, only edges in $H$ can be used.
Second,  in the CDR-$\alpha$ problem, the objective is to minimize the 
number of chosen vertices; while in the basic $k$-spanner problem, the 
objective is to minimize the number of chosen edges.
Finally, in the basic $k$-spanner problem, the distance is measured by 
the number of edges; while in the CDR-$\alpha$ problem, the distance is 
measured by the number of internal nodes. Despite the above differences, these 
two problems share similar approximability and hardness results. 
Alth\"{o}fer \textit{et al.} proved that 
every graph has a $k$-spanner of at most 
$n^{1+\frac{1}{\lfloor (k+1)/2\rfloor}}$ edges, 
and such a $k$-spanner can be constructed in polynomial time~\cite{Althofer1993, 
Dinitz:2016:ALS:2884435.2884494}.
Since the number of edges in any $k$-spanner is at least $n-1$, 
this yields an 
$O(n^{\frac{1}{\lfloor (k+1)/2\rfloor}})$-approximation 
algorithm for the basic $k$-spanner problem. 
For $k = 2$, there is an $O(\log n)$-approximation algorithm due to Kortsarz and 
Peleg~\cite{KORTSARZ1994222}, and this is the best 
possible~\cite{Kortsarz2001}.
For $k = 3$, Berman \textit{et al.} proposed an 
$\tilde{O}(n^{1/3})$-approximation 
algorithm~\cite{BERMAN201393}. For $k = 4$, Dinitz and Zhang proposed an
$\tilde{O}(n^{1/3})$-approximation 
algorithm~\cite{Dinitz:2016:ALS:2884435.2884494}.
On the hardness side, it has been proved that for any constant $\epsilon > 0$
and for $3 \leq k \leq \log^{1-2\epsilon}n$, 
unless $NP \subseteq BPTIME(2^{poly\log n})$,
there is no polynomial-time algorithm that approximates the basic
$k$-spanner problem to a factor better than 
$2^{(\log^{1-\epsilon}n)/k}$~\cite{Dinitz:2015:LCI:2846106.2818375}.

\section{Two Algorithms for the 1-DR-$\alpha$ Problem}
\subsection{The First Algorithm}
We first give the formal definition of the 
Set Cover with Pairs (SCP) problem. 

\begin{definition}
Let $T$ be a set of $t$ targets. 
Let $V$ be a set of $n$ elements.
For every pair of elements $P=\{v_1, v_2\} \subseteq V$,
$C(P)$ denotes the set of targets covered by $P$.
The Set Cover with Pairs (SCP) problem asks for the smallest 
subset $S$ of $V$ such that 
$\bigcup\limits_{\{v_1, v_2\} \subseteq S}{C(\{v_1, v_2\})}=T$.
\end{definition}
Let $OPT$ be the number of elements in the optimal solution.
We only need to consider the case where $t > 1$ and $OPT > 1$.   
\subsubsection{Approximating the SCP Problem}
\label{alg: SCP}
Our algorithm is a simple greedy algorithm: 
in each round, we choose at most two elements $u$ and $v$ that 
maximize the number of covered targets. 
Specifically, $S$ is an empty set initially.
In each round, we select a set $P \subseteq V\setminus S$
such that $|P| \leq 2$ and $P$ increases 
the number of covered targets the most, i.e.,
$P=\argmax\limits_{P': |P'| \leq 2, P' \subseteq V \setminus S} {g(P')}$, where 
$$g(P') = |\bigcup_{\{v_1,v_2\} \subseteq S \cup P}{C(\{v_1,v_2\})}|
-|\bigcup_{\{v_1,v_2\} \subseteq S}{C(\{v_1,v_2\})}|.$$
We then add $P$ to $S$ and repeat the above process until
all the targets are covered.
The algorithm terminates once all targets are covered.\footnote{In~\cite{Hassin2005}, 
in each round, a set $P=\argmax\limits_{P': |P'| \leq 2, P' \subseteq V \setminus S}
{g'(P')}$ is added to $S$, where $g'(P') = \frac{g(P')}{|P'|}$.}

\begin{theorem}
\label{SCP}
The above algorithm is an $O(\sqrt{n\log t})$-approximation algorithm 
for the SCP problem.
\end{theorem}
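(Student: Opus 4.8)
The plan is to analyze the greedy algorithm by splitting its execution into two phases based on a threshold parameter. Let $k = \sqrt{n/\log t}$ (up to constant factors); I would call a round \emph{cheap} if the set $P$ chosen in that round covers at least $t / (k \cdot OPT)$ new targets, and \emph{expensive} otherwise. The first step is to bound the number of cheap rounds: since each cheap round covers at least $t/(k\cdot OPT)$ targets and there are only $t$ targets total, there can be at most $k \cdot OPT$ cheap rounds, contributing at most $2k\cdot OPT = O(\sqrt{n/\log t}\cdot OPT)$ elements to $S$. Wait --- this alone is not enough, so the threshold must be chosen more carefully; I would instead track the number of \emph{remaining} targets and use the standard greedy doubling argument.

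Here is the sharper approach. Run the greedy algorithm and let $r$ be the number of elements it selects; I want to show $r = O(\sqrt{n\log t}\cdot OPT)$. Fix the optimal solution $O^*$ with $|O^*| = OPT$. At any point in the execution, suppose $R$ targets remain uncovered. The key combinatorial claim is that there exist two elements $u, v \in O^* \setminus S$ (or a single element, or one element of $O^*$ paired with one already in $S$) whose addition covers at least $R / OPT^2$ of the remaining targets --- because the $\binom{OPT}{2} \le OPT^2$ pairs within $O^*$ together cover all $R$ remaining targets, so some pair covers at least $R/OPT^2$. Hence the greedy choice covers at least $R/OPT^2$ new targets. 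This gives the bound $r = O(OPT^2 \log t)$, which is the trivial-ish bound; to get $O(\sqrt{n\log t}\cdot OPT)$ I need the second ingredient: once $S$ is large, pairing a \emph{new} element $v$ with the \emph{already chosen} elements in $S$ is powerful. Specifically, after $S$ has accumulated many elements, for each $w \in O^*\setminus S$, the set $\{w\} \cup S$ covers some targets; averaging over $w \in O^*$ shows some single new element $w$, together with $S$, covers a $1/OPT$ fraction of the remaining targets. So once $|S|$ is moderately large the per-round progress improves to $R/OPT$, giving only $O(OPT\log t)$ further rounds.

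The balancing step: choose the switchover point at $|S| = \Theta(\sqrt{n\log t}\cdot OPT)$... but actually the cleaner way is a direct two-regime count. While fewer than $\sqrt{n\log t} \cdot OPT$ elements have been chosen, just use whatever progress we get (at least one new target per round when anything is coverable, plus we can always fall back to the $R/OPT^2$ bound); this phase contributes at most $\sqrt{n\log t}\cdot OPT$ elements by definition. For the second phase, once $|S| \ge \sqrt{n\log t}\cdot OPT$, I claim the averaging argument over single elements $w \in O^*$ paired against all of $S \cup \{w\}$ yields progress $R/OPT$ per round, so $O(OPT\log t)$ additional rounds suffice. Summing: $r = O(\sqrt{n\log t}\cdot OPT) + O(OPT\log t) = O(\sqrt{n\log t}\cdot OPT)$, since $\log t \le \sqrt{n\log t}$ for $t \le 2^n$.

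The main obstacle I anticipate is making the two-phase accounting airtight --- in particular verifying that in the first phase each round really does make \emph{some} progress that can be charged correctly, and nailing down the averaging claim in the second phase that a single new element, paired with the entire current solution $S$, covers a $1/OPT$ fraction of remaining targets. The latter requires that every remaining target, being covered by some pair in $O^*$, is in particular "reachable" by augmenting $S$ with one vertex of $O^*$ once $S$ already contains enough structure --- this is exactly where the bound $|S| \ge \sqrt{n\log t}\cdot OPT$ (rather than something smaller) must be used, and I would need to check the arithmetic so that the two phase sizes balance at $\sqrt{n\log t}\cdot OPT$. I would also double-check the edge cases $t \le 1$ and $OPT \le 1$ are correctly excluded, as the excerpt already notes.
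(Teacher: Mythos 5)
Your first ingredient is exactly the paper's argument: since all pairs inside the optimum $O^*$ together cover every remaining target, some pair covers at least an $R/\binom{OPT}{2}$ fraction, so greedy reduces the number of uncovered targets geometrically and terminates within $O(OPT^2\log t)$ rounds, i.e.\ $ALG=O(OPT^2\log t)$. But you then discard this as ``trivial-ish'' and hang the theorem on a second-phase claim that does not hold: that once $|S|\ge\sqrt{n\log t}\cdot OPT$, some \emph{single} element $w\in O^*$ together with the current $S$ covers a $1/OPT$ fraction of the remaining targets. A remaining target is only guaranteed to be covered by some pair $\{w_1,w_2\}\subseteq O^*$, and if both $w_1$ and $w_2$ lie outside $S$, adding one element does nothing for it; the elements greedy has accumulated need not intersect $O^*$ or pair usefully with it, so no amount of ``$S$ is large'' supplies the missing endpoint. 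You yourself flag this as the step you cannot verify, and indeed without it the second phase only enjoys the $R/OPT^2$ rate, so your accounting gives $O(\sqrt{n\log t}\cdot OPT + OPT^2\log t)$, which is not yet the claimed bound without a further case split on the size of $OPT$.

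The fix is much simpler than the machinery you set up, and it is what the paper does: combine the round bound you already proved with the trivial bound $ALG\le n$. Since $ALG\le\min\{n,\,O(OPT^2\log t)\}\le O(\sqrt{n\cdot OPT^2\log t})=O(\sqrt{n\log t})\cdot OPT$ (equivalently, split on whether $OPT\le\sqrt{n/\log t}$: if so, $OPT^2\log t\le\sqrt{n\log t}\cdot OPT$; if not, $n\le\sqrt{n\log t}\cdot OPT$), the theorem follows immediately. No two-phase analysis, no switchover threshold, and no claim about single-element augmentation is needed; the geometric mean of the two upper bounds on $ALG$ is the entire balancing step.
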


\begin{proof}
Let $R_i$ be the number of uncovered targets after round $i$.
In the first round, some pair of elements in the optimal solution 
can cover at least $t/{{OPT}\choose{2}}$ targets. 
Since we choose a pair of elements greedily in each round, 
$R_1 \leq t(1-1/{{OPT}\choose{2}})$. 
In the second round, there exists 
a pair of elements in the optimal solution that can cover at least 
$R_1/{{OPT}\choose{2}}$ targets among the $R_1$ uncovered targets.
Again, we choose the pair of elements that 
increases the number of covered targets the most.
Hence, $R_2 \leq R_1 - R_1/{{OPT}\choose{2}}\leq 
t(1-1/{{OPT}\choose{2}})^2$. In general, 
$R_i \leq t(1-1/{{OPT}\choose{2}})^i$.
After $r = {{OPT}\choose{2}}\ln t$ rounds, 
the number of uncovered targets is at most
$t(1-1/{{OPT}\choose{2}})^r 
\leq t(e^{-1/{{OPT}\choose{2}}})^r
\leq te^{-\ln t} = 1$.
Hence, after $O(OPT^2\ln t)$ rounds, all targets are covered.
Let $ALG$ be the number of elements chosen by the algorithm.
Since we choose at most two elements in each round, 
$ALG = O(OPT^2\ln t)$.
Finally, since $ALG \leq n$,
$ALG = O(\sqrt{n \cdot OPT^2\ln t}) = O(\sqrt{n\ln t})OPT$.
\end{proof}

Note that, in Theorem~\ref{SCP}, 
we can replace $n$ with any upper bound of the size of solutions 
obtained by any polynomial-time algorithm $\mathcal{A}$ for the SCP problem. 
This is achieved by executing both $\mathcal{A}$ and our algorithm. 
Choosing the best between the two outputs yields the desired approximation 
ratio. An example is replacing $n$ with $2t$. 

\subsubsection{Approximating the 1-DR-2 Problem}
\label{alg: 1-DR-2}
To transform the 1-DR-2 problem to the SCP problem,
we treat each target couple as a target.
Moreover, we treat each vertex as a target so that the output is a dominating set. 
The set of elements $V$ in the SCP problem is the 
vertex set of $G$. $C(P)$ consists of all the vertices that are dominated by $P$ in $G$ 
and all the target couples that are covered by $P$ in $G$. In this SCP instance, 
$n=n(G)$ and $t=O(n(G)^2)$. 
It is easy to verify the following result.

\begin{theorem}
There is an $O(\sqrt{n\log n})$-approximation algorithm 
for the 1-DR-2 problem.
\end{theorem}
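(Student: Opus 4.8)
The plan is to make precise the reduction from the 1-DR-2 problem to the SCP problem sketched above and then to apply Theorem~\ref{SCP}. Given a 1-DR-2 instance $G=(V,E)$, I would build the SCP instance whose element set is $V$, whose target set $T$ contains one target for each vertex $w\in V$ and one target for each target couple $[u,v]$, and in which $C(\{v_1,v_2\})$ consists of every vertex of $G$ dominated by $v_1$ or by $v_2$, together with every target couple $[u,v]$ for which the subgraph of $G$ induced by $\{v_1,v_2\}\cup\{u,v\}$ contains a path between $u$ and $v$ with at most two internal vertices. Here the number of elements is $n=n(G)$ and the number of targets is $t=O(n(G)^2)$. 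The degenerate instances in which the optimal 1-DR-2 solution has size at most $1$ (for example, $G$ a star) are not covered by the analysis behind Theorem~\ref{SCP}, and I would dispose of them directly by brute force over the $n$ singletons.

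The first step is the feasibility equivalence: for any $S\subseteq V$ with $|S|\ge 2$, the union $\bigcup_{\{v_1,v_2\}\subseteq S}C(\{v_1,v_2\})$ equals $T$ if and only if $S$ is a feasible solution of the 1-DR-2 problem. For the vertex targets this holds because $S$ covers all of them exactly when every vertex of $G$ is dominated by some element of $S$, i.e., exactly when $S$ is a dominating set. For the couple targets it holds because the target for $[u,v]$ lies in the union exactly when some $v_1,v_2\in S$ yield a path between $u$ and $v$ with at most two internal vertices --- the case of a single internal vertex being the subcase $v_1=v_2$, which is available because $|S|\ge 2$ --- and this is precisely the condition $m^S(u,v)\le 2$. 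Hence the SCP optimum and the 1-DR-2 optimum coincide (call the common value $OPT$), and every feasible SCP solution of size at least two is a feasible 1-DR-2 solution of the same size.

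The second step is simply to run the greedy algorithm of Section~\ref{alg: SCP} on this instance and invoke Theorem~\ref{SCP}: the output is a feasible SCP solution of size $O(\sqrt{n\log t})\cdot OPT$, and since $t=O(n(G)^2)$ we have $\log t=O(\log n(G))$, so the size is $O(\sqrt{n\log n})\cdot OPT$; by the first step this set is a feasible 1-DR-2 solution, which proves the theorem. The only part that needs any care --- and the reason the paper can call the statement easy to verify --- is the bookkeeping of the reduction: confirming that single-vertex coverage of a target couple is handled correctly by the pair formalism, and that the $OPT\le 1$ corner cases cause no trouble. The substantive work is already in Theorem~\ref{SCP}, whose dependence on $\sqrt{n}$ rather than $\sqrt{t}$ is exactly what keeps the bound from collapsing to the trivial $O(n\sqrt{\log n})$ when $t=\Theta(n^2)$.
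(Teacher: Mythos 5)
Your proposal is correct and follows essentially the same route as the paper: the identical reduction (vertices as targets to force domination, target couples as targets, $C(P)$ as the vertices dominated by $P$ plus the couples covered by $P$) followed by an application of Theorem~\ref{SCP} with $t=O(n(G)^2)$. The extra bookkeeping you supply (the $v_1=v_2$ subcase and the $OPT\le 1$ corner case) is exactly the verification the paper leaves to the reader.
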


\subsubsection{The Set Cover with $\alpha$-Tuples (SCT-$\alpha$) Problem}
\label{defi: SCT}
In the 1-DR-2 problem, every target couple can be covered by no more than two vertices.
In the 1-DR-$\alpha$ problem, every target couple can be covered by no more than
$\alpha$ vertices. Hence, we consider the following generalization of 
the SCP problem.
\begin{definition}
Let $T$ be a set of $t$ targets. 
Let $V$ be a set of $n$ elements.
Let $\alpha$ be a positive integer constant greater than one.
For every $\alpha$-tuple $P=\{v_1, v_2, \cdots, v_{\alpha}\}\subseteq V$,
$C(P)$ denotes the set of targets covered by $P$.
The Set Cover with $\alpha$-Tuples (SCT-$\alpha$) problem asks for the smallest 
subset $S$ of $V$ such that 
$\bigcup\limits_{\{v_1, v_2, \cdots, v_{\alpha}\} \subseteq S}
{C(\{v_1, v_2, \cdots, v_{\alpha}\})}=T$.
\end{definition}
We only need to consider the case where $t > 1$ and $OPT \geq \alpha$ 
($\alpha$ is a constant).   
\subsubsection{Approximating the SCT-$\alpha$ Problem and the 1-DR-$\alpha$ Problem}
\label{algo: SCT}
The algorithm for the SCT-$\alpha$ problem is a straightforward generalization 
of the algorithm for the SCP problem.
The difference is that, in each round, 
we choose a set $P$ of at most $\alpha$ elements that
increases the number of covered targets the most.
The transformation from the 1-DR-$\alpha$ problem to the SCT-$\alpha$ problem is also 
similar to the previous transformation. The value of $\alpha$ in the 
constructed SCT-$\alpha$ instance is equal to that in the 1-DR-$\alpha$ instance. 
Again, $n = n(G)$ and $t = O(n(G)^2)$ in the constructed SCT-$\alpha$ instance.
Theorem~\ref{thrm: 1st} is a direct result of the following theorem.
\begin{theorem}
\label{thrm: SCT}
There is an $O(n^{1-\frac{1}{\alpha}} \cdot 
(\ln t)^{\frac{1}{\alpha}})$-approximation algorithm for the 
SCT-$\alpha$ problem.
\end{theorem}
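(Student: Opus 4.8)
The plan is to mimic the greedy analysis from Theorem~\ref{SCP}, replacing pairs with $\alpha$-tuples throughout, and then to optimize the cutoff point at which we switch from the greedy bound to the trivial bound $ALG \le n$. First I would let $R_i$ denote the number of uncovered targets after round $i$ of the generalized greedy algorithm (which picks, in each round, a set $P$ of at most $\alpha$ elements maximizing the marginal coverage $g(P)$). Since $OPT \ge \alpha$, the optimal solution contains at least one $\alpha$-tuple, and by averaging, among the currently uncovered targets some $\alpha$-subset of the optimal solution covers at least a $1/\binom{OPT}{\alpha}$ fraction of them. Greedy does at least this well, so $R_{i} \le R_{i-1}(1 - 1/\binom{OPT}{\alpha})$, giving $R_i \le t(1-1/\binom{OPT}{\alpha})^i \le t\,e^{-i/\binom{OPT}{\alpha}}$ exactly as before. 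Hence after $r = \lceil \binom{OPT}{\alpha}\ln t\rceil = O(OPT^{\alpha}\ln t)$ rounds all targets are covered, and since each round adds at most $\alpha$ elements (a constant), $ALG = O(OPT^{\alpha}\ln t)$.

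Next I would combine this with $ALG \le n$ to get the stated bound. The two estimates give
\[
ALG = \min\bigl\{\,O(OPT^{\alpha}\ln t),\; n\,\bigr\}
\le \bigl(O(OPT^{\alpha}\ln t)\bigr)^{1/\alpha}\cdot n^{1-1/\alpha}
= O\!\left(n^{1-1/\alpha}(\ln t)^{1/\alpha}\right)\cdot OPT,
\]
where the middle inequality is the weighted geometric-mean trick: $\min\{a,b\} \le a^{1/\alpha} b^{1-1/\alpha}$ for nonnegative $a,b$, applied with $a = O(OPT^{\alpha}\ln t)$ and $b = n$, and then bounding $a^{1/\alpha} = O(OPT\cdot(\ln t)^{1/\alpha})$. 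This yields the approximation ratio $O(n^{1-1/\alpha}(\ln t)^{1/\alpha})$ claimed in Theorem~\ref{thrm: SCT}.

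Finally, I would note that Theorem~\ref{thrm: 1st} follows immediately: applying the transformation from the 1-DR-$\alpha$ problem to the SCT-$\alpha$ problem described in Section~\ref{algo: SCT}, we have $n = n(G)$ and $t = O(n(G)^2)$, so $(\ln t)^{1/\alpha} = O((\log n)^{1/\alpha})$ and the ratio becomes $O(n^{1-1/\alpha}(\log n)^{1/\alpha})$. One subtlety to check is that the case assumption $OPT \ge \alpha$ is genuinely without loss of generality: if $OPT < \alpha$, which is only finitely many possibilities since $\alpha$ is constant, one can enumerate all subsets of $V$ of size less than $\alpha$ in polynomial time and test feasibility directly, so that regime is solved exactly.

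The main obstacle I anticipate is not in the core recurrence — which is a routine generalization — but in being careful about the per-round selection cost and the edge cases. Specifically, each greedy round requires evaluating $g(P)$ over all $\alpha$-subsets $P$ of $V \setminus S$, which is $O(n^{\alpha})$ candidates, polynomial since $\alpha$ is a fixed constant; I would state this explicitly so the "polynomial-time" claim is justified. I would also make sure the averaging argument is stated correctly when the optimal solution's induced coverage is taken over the \emph{remaining} uncovered targets (so that the fraction $1/\binom{OPT}{\alpha}$ is valid at every round, not just the first), and that $g$ is defined exactly as the marginal gain so that greedy's choice dominates any fixed $\alpha$-tuple's gain. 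None of these is deep, but getting the constants and the geometric-mean step right is where the writing needs care.
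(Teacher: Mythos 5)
Your proof is correct, and its core — the greedy recurrence $R_i \le t(1-1/\binom{OPT}{\alpha})^i$, hence $ALG = O(\alpha\,OPT^{\alpha}\ln t)$ after $O(OPT^{\alpha}\ln t)$ rounds — is exactly the paper's argument for Theorem~\ref{thrm: SCT}, inherited from Theorem~\ref{SCP}. Where you diverge is in how the greedy bound is balanced against the trivial bound $ALG \le n$: the paper first takes the square-root combination $ALG = O(\sqrt{n\cdot\alpha\,OPT^{\alpha}\ln t})$ and then performs a case split at the threshold $OPT = n^{c}$ with $c = \frac{1}{\alpha}-\frac{\ln\ln(t^{\alpha})}{\alpha\ln n}$, relying on the appendix computation (Claim~\ref{c}) to verify that both cases yield the ratio $n^{1-\frac{1}{\alpha}}(\alpha\ln t)^{\frac{1}{\alpha}}$. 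You instead apply the weighted geometric-mean inequality $\min\{a,b\}\le a^{1/\alpha}b^{1-1/\alpha}$ with $a = O(OPT^{\alpha}\ln t)$ and $b=n$ in a single step, which gives the same ratio $O(n^{1-\frac{1}{\alpha}}(\ln t)^{\frac{1}{\alpha}})\cdot OPT$ while dispensing with the explicit threshold, the case analysis, and Claim~\ref{c} altogether; this is a genuine (if modest) simplification of the balancing step, and your added remarks on the $OPT<\alpha$ regime, the $O(n^{\alpha})$ per-round running time, and the validity of the averaging argument at every round make explicit some points the paper leaves implicit. The only caution is to keep the constant tracking honest when writing $\bigl(O(OPT^{\alpha}\ln t)\bigr)^{1/\alpha} = O\bigl(OPT(\ln t)^{1/\alpha}\bigr)$, which is fine precisely because $\alpha$ is a fixed constant, as you note.
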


We have the following claim, whose proof can be found in the appendix.
\begin{myclaim}
\label{c}
When $c = \frac{1}{\alpha}-\frac{\ln \ln (t^{\alpha})}{\alpha \ln n}$, 
$n^{1-c}=\sqrt{n \cdot \alpha(n^c)^{\alpha-2}\ln t} = 
n^{1-\frac{1}{\alpha}} \cdot (\alpha \ln t)^{\frac{1}{\alpha}}$.
\end{myclaim}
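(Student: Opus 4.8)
The plan is to verify the claimed chain of equalities by direct algebraic manipulation, treating $c = \frac{1}{\alpha} - \frac{\ln\ln(t^\alpha)}{\alpha \ln n}$ as given and simplifying each expression to a common closed form in terms of $n$, $\alpha$, and $\ln t$. I would proceed from the middle outward. First I would compute $n^c$: taking logarithms, $\ln(n^c) = c\ln n = \frac{\ln n}{\alpha} - \frac{\ln\ln(t^\alpha)}{\alpha} = \frac{1}{\alpha}\ln\!\bigl(\tfrac{n}{\ln(t^\alpha)}\bigr)$, so $n^c = \bigl(n/\ln(t^\alpha)\bigr)^{1/\alpha} = \bigl(n/(\alpha\ln t)\bigr)^{1/\alpha}$. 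This is the key intermediate quantity; everything else is bookkeeping around it.

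Next I would substitute this into the radical $\sqrt{n \cdot \alpha (n^c)^{\alpha-2}\ln t}$. We have $(n^c)^{\alpha-2} = \bigl(n/(\alpha\ln t)\bigr)^{(\alpha-2)/\alpha}$, hence the quantity under the square root is $n \cdot \alpha\ln t \cdot n^{(\alpha-2)/\alpha} \cdot (\alpha\ln t)^{-(\alpha-2)/\alpha} = n^{1 + (\alpha-2)/\alpha} \cdot (\alpha\ln t)^{1-(\alpha-2)/\alpha} = n^{(2\alpha-2)/\alpha}\cdot(\alpha\ln t)^{2/\alpha}$. Taking the square root gives $n^{(\alpha-1)/\alpha}\cdot(\alpha\ln t)^{1/\alpha} = n^{1-\frac{1}{\alpha}}(\alpha\ln t)^{\frac{1}{\alpha}}$, which matches the rightmost expression in the claim.

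Finally I would confirm the leftmost equality $n^{1-c} = n^{1-\frac{1}{\alpha}}(\alpha\ln t)^{\frac{1}{\alpha}}$: since $1 - c = 1 - \frac{1}{\alpha} + \frac{\ln\ln(t^\alpha)}{\alpha\ln n}$, we get $n^{1-c} = n^{1-\frac{1}{\alpha}} \cdot n^{\frac{\ln\ln(t^\alpha)}{\alpha\ln n}} = n^{1-\frac{1}{\alpha}} \cdot \ln(t^\alpha)^{1/\alpha} = n^{1-\frac{1}{\alpha}}(\alpha\ln t)^{\frac{1}{\alpha}}$, using $n^{(\ln x)/\ln n} = x$ with $x = \ln(t^\alpha)$. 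This closes the loop.

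There is no real obstacle here; the only thing to be careful about is the domain conditions — we need $\ln(t^\alpha) \geq 1$ (equivalently $t^\alpha \geq e$, which holds since $t > 1$ and $\alpha \geq 2$) so that the various logarithms and exponents are well-defined and $c$ lies in a sensible range, and implicitly $n$ large enough that $n \geq \alpha\ln t$ so that $c \in (0, 1/\alpha]$; these are consistent with the standing assumptions $t>1$, $OPT \geq \alpha$, and can be noted in passing. The substance is entirely in the identity $n^c = \bigl(n/(\alpha\ln t)\bigr)^{1/\alpha}$, from which both equalities follow by routine exponent arithmetic.
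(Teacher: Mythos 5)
Your proposal is correct and is essentially the same routine algebraic verification as the paper's: the paper reduces the first equality to the identity $n^{1-c\alpha}=\alpha\ln t$ and then evaluates $n^{\ln\ln(t^\alpha)/\ln n}$ via $n^{(\ln x)/\ln n}=x$, while you organize the same computation around the explicit form $n^c=(n/(\alpha\ln t))^{1/\alpha}$ and substitute forward. Both arguments use the same key identity and the same standing assumptions ($t>1$, $\alpha\geq 2$), so there is nothing substantive to add.
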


\textbf{Proof of Theorem~\ref{thrm: SCT}:}
Let $R_i$ be the number of uncovered targets after round $i$.
By a similar argument in the proof of Theorem~\ref{SCP}, 
we get that $R_i \leq t(1-1/{{OPT}\choose{\alpha}})^i$.
After $r = {{OPT}\choose{\alpha}}\ln t$ rounds, 
the number of uncovered targets is at most one.
Hence, after $O(OPT^{\alpha}\ln t)$ rounds, all targets are covered.
Let $ALG$ be the number of elements chosen by the algorithm.
Since we choose at most $\alpha$ elements in each round, 
$ALG = O(\alpha OPT^{\alpha}\ln t)$.
Since $ALG \leq n$, $ALG = O(\sqrt{n \cdot \alpha OPT^{\alpha}\ln t})$.

Let $c = \frac{1}{\alpha}-\frac{\ln \ln (t^{\alpha})}{\alpha \ln n}$.
When $OPT \geq n^c$, the approximation ratio is $n^{1-c}$.
When $OPT \leq n^c$, 
$ALG = O(\sqrt{n \cdot \alpha OPT^{\alpha-2}\ln t})OPT
= O(\sqrt{n \cdot \alpha(n^c)^{\alpha-2}\ln t})OPT$.
The proof then follows from Claim~\ref{c} and 
$\alpha^{\frac{1}{\alpha}}=O(1)$.
\qed

\subsection{The Second Algorithm}
The second algorithm is designed for the 1-DR-$\alpha$ problem 
when $\alpha \geq 5$. It has a better approximation ratio than 
that of the previous algorithm when $\alpha \geq 5$. 
The algorithm is suggested in Lemma~\ref{idea}: 
We first find a dominating set $D$ by any $O(\log n)$-approximation algorithm.
Let $D' = D$.
For any two vertices $u$ and $v$ in $D$, if $m(u,v) \leq 3$, we then 
add at most three vertices to $D'$ so that $m^{D'}(u,v) \leq 3$.

\textbf{Proof of Theorem~\ref{thrm: 2nd}:}
Let $OPT_{DS}$ be the size of the minimum dominating set in $G$.
Let $OPT$ be the size of the optimum of the 1-DR-$\alpha$ problem.
Since any feasible solution of the 1-DR-$\alpha$ problem must be a 
dominating set, $OPT_{DS} \leq OPT$.
$|D'| \leq |D|+3{{|D|}\choose{2}} = O((\log n \cdot OPT_{DS})^2) 
= O((\log n \cdot OPT)^2)$.
Since $|D'| \leq n$, we have $|D'| = O(\sqrt{n \cdot (\log n \cdot OPT)^2}) = 
O(\sqrt{n}\log n)OPT$. \qed

\section{Inapproximability Result}
\label{inapprox}
\subsection{The MIN-REP Problem}
We prove Theorem~\ref{thrm: inapprox} by a reduction from the 
MIN-REP problem~\cite{Kortsarz2001}. 
The input of the MIN-REP problem consists of a 
bipartite graph $G=(X, Y, E)$, a partition of $X$, 
$\mathcal{P}_X=\{X_1, X_2, \cdots, X_{k_X}\}$, and a partition of $Y$,
$\mathcal{P}_Y=\{Y_1, Y_2, \cdots, Y_{k_Y}\}$, such that 
$\bigcup_{i=1}^{k_X}{X_i} = X$ and 
$\bigcup_{i=1}^{k_Y}{Y_i} = Y$.
Every $X_i \in \mathcal{P}_X$ (respectively, $Y_i \in \mathcal{P}_Y$) 
has size $|X|/k_X$ (respectively, $|Y|/k_Y$).
$X_1, X_2, \cdots, X_{k_X}$ and $Y_1, Y_2, \cdots, Y_{k_Y}$ 
are called \textit{super nodes}, 
and two super nodes $X_i$ and 
$Y_j$ are \textit{adjacent} if some vertex in $X_i$ 
and some vertex in $Y_j$ are adjacent in $G$. 
If $X_i$ and $Y_j$ are adjacent, then $X_i$ and $Y_j$ form a \textit{super edge}.
In the MIN-REP problem, our task is to choose representatives for super nodes 
so that if $X_i$ and $Y_j$ form a super edge, 
then some representative for $X_i$ and
some representative for $Y_j$ are adjacent in $G$. 
Note that a super node may have multiple representatives.
Specifically, the goal of the MIN-REP problem is to 
find the smallest subset $R \subseteq X \cup Y$ such that if $X_i$ and 
$Y_j$ form a super edge, then $R$ must contain two vertices $x$ 
and $y$ such that $x \in X_i$, $y \in Y_j$ and $(x, y) \in E$. 
In this case, we say that $\{x,y\}$ \textit{covers} 
the super edge $(X_i, Y_j)$. 
The inapproximability result of the MIN-REP
problem is stated as the following theorem.
\begin{theorem}[Kortsarz \textit{et al.}~\cite{doi:10.1137/S0097539702416736}]
\label{MIN-REP}
For any constant $\epsilon > 0$, 
unless $NP \subseteq DTIME(n^{poly\log n})$,
there is no polynomial-time algorithm that can distinguish 
between instances of the MIN-REP problem with a solution of size $k_X+k_Y$ 
and instances where every solution is of size at least 
$(k_X+k_Y) \cdot 2^{\log^{1-\epsilon}{n(G)}}$, where $n(G)$ is the number of 
vertices in the input graph of the MIN-REP problem.
\end{theorem}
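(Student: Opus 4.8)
The plan is to prove this hardness the standard way: start from a two-prover one-round game (equivalently, a Label Cover instance) of constant soundness gap obtained from the PCP theorem, amplify the gap by Raz's parallel repetition theorem, and encode the amplified game as a MIN-REP instance so that a gap in the game value translates into a gap in the optimal number of representatives. I would first fix a two-prover one-round game $\mathcal{G}_0$ derived from gap-3SAT: the two provers receive questions from finite sets $Q_X, Q_Y$, answer from a constant-size alphabet $\Sigma$, and a fixed predicate decides acceptance; completeness is $1$ and soundness is some constant $1-\delta<1$. Applying $\ell$-fold parallel repetition yields $\mathcal{G}=\mathcal{G}_0^{\otimes \ell}$ whose question sets have sizes $|Q_X|^\ell$ and $|Q_Y|^\ell$, whose answer alphabet is $\Sigma^\ell$, and whose value is at most $2^{-c\ell}$ for a constant $c=c(\delta,|\Sigma|)>0$ by Raz's theorem.

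I then build the MIN-REP instance directly from $\mathcal{G}$. The super nodes are the questions: those of the first prover form $\mathcal{P}_X$ and those of the second form $\mathcal{P}_Y$, so $k_X=|Q_X|^\ell$ and $k_Y=|Q_Y|^\ell$; inside each super node I place one vertex per possible answer in $\Sigma^\ell$; a super edge joins two questions whenever the verifier may pose them together; and two vertices (answers) $x,y$ are adjacent exactly when the verifier accepts that answer pair on the corresponding question pair. Completeness is immediate: a strategy winning with probability one selects a single answer per question, and choosing the corresponding single vertex per super node yields a feasible solution of size $k_X+k_Y$.

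The heart of the argument, and the step I expect to be the main obstacle, is soundness: I must show any feasible MIN-REP solution $R$ is large when $\mathrm{val}(\mathcal{G})$ is tiny. Write $r_i=|R\cap X_i|$ and $s_j=|R\cap Y_j|$. Feasibility means every super edge $(X_i,Y_j)$ contains an adjacent representative pair, so in particular $r_i,s_j\ge 1$. I turn $R$ into a randomized prover strategy by answering each question with a uniformly random one of its chosen representatives; then $(X_i,Y_j)$ is satisfied with probability at least $1/(r_i s_j)$, so averaging against the (regularized, uniform) super-edge distribution gives $\mathrm{val}(\mathcal{G}) \ge \mathbb{E}[1/(r_i s_j)]$. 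By convexity of $1/x$ this is at least $1/\big((\sum_i r_i/k_X)(\sum_j s_j/k_Y)\big)$, i.e. $\mathrm{val}(\mathcal{G}) \ge k_X k_Y/(|R_X|\,|R_Y|)$ with $|R_X|=\sum_i r_i$, $|R_Y|=\sum_j s_j$. Hence $|R_X|\,|R_Y| \ge k_X k_Y/\mathrm{val}(\mathcal{G})$, and by the AM--GM inequality $|R|=|R_X|+|R_Y| \ge 2\sqrt{k_X k_Y/\mathrm{val}(\mathcal{G})} = \Omega\big((k_X+k_Y)\cdot \mathrm{val}(\mathcal{G})^{-1/2}\big)$. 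Thus the optimum jumps from $k_X+k_Y$ in the completeness case to at least $(k_X+k_Y)\cdot 2^{c\ell/2}$ in the soundness case. Regularizing the super-edge distribution so the averaging is against the uniform measure, and justifying the convexity estimate, are the technical points to nail down here.

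Finally I choose $\ell$ so the gap $2^{c\ell/2}$ matches $2^{\log^{1-\epsilon} n(G)}$. The instance has $n(G)=(k_X+k_Y)\cdot|\Sigma|^\ell = 2^{\Theta(\ell \log |Q|)}$ vertices, where $|Q|=\mathrm{poly}(N)$ for $N$ the size of the starting 3SAT instance, so $\log n(G)=\Theta(\ell\log N)$. Setting $\ell=\Theta(\log^{1-\epsilon} n(G))$ makes the gap $2^{\Omega(\log^{1-\epsilon}n(G))}$ as required, and solving $\log n(G)=\Theta(\ell\log N)$ under this choice forces $n(G)=2^{(\log N)^{\Theta(1/\epsilon)}}$, so $n(G)$ is quasi-polynomial in $N$ while $\ell$ stays polylogarithmic. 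Consequently the whole reduction runs in time $2^{\mathrm{poly}\log(N)}$, so a polynomial-time algorithm distinguishing the two MIN-REP cases would decide gap-3SAT, hence any language in $NP$, in quasi-polynomial time, giving $NP\subseteq DTIME(n^{\mathrm{poly}\log n})$ and proving the contrapositive.
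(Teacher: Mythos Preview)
The paper does not prove Theorem~\ref{MIN-REP}; it is quoted from Kortsarz, Krauthgamer, and Lee~\cite{doi:10.1137/S0097539702416736} and invoked as a black box in the reduction to the 1-DR-2 problem, so there is no proof in this paper to compare your proposal against.

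Evaluated on its own, your sketch is the standard argument and is essentially correct. The one place to tighten is the soundness step you label ``by convexity of $1/x$'': what is actually needed is the \emph{joint} convexity of $(x,y)\mapsto 1/(xy)$ on the positive orthant (its Hessian is positive definite there) together with bi-regularity of the super-edge graph, so that the marginals of the edge distribution on $i$ and on $j$ are both uniform; Jensen then gives $\mathbb{E}_{(i,j)}[1/(r_i s_j)] \ge 1/\bigl(\mathbb{E}[r_i]\cdot\mathbb{E}[s_j]\bigr)=k_X k_Y/(|R_X|\,|R_Y|)$ as you claim. Convexity of the one-variable map $1/x$ by itself does not deliver this inequality, because the super-edge distribution is not a product measure. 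You already flag the regularization issue; with that and the joint-convexity observation the argument is complete, and your parameter choice and running-time accounting are correct.
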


\subsection{The Reduction}
Given inputs $G=(X,Y,E)$, $\mathcal{P}_X$, and $\mathcal{P}_Y$ 
of the MIN-REP problem, 
we construct a corresponding graph $G'(G,\mathcal{P}_X, \mathcal{P}_Y)$ 
of the 1-DR-2 problem. 
When $G$, $\mathcal{P}_X$, and $\mathcal{P}_Y$ are clear from the context, 
we simply write $G'$ instead of $G'(G,\mathcal{P}_X, \mathcal{P}_Y)$. 
Initially, $G'=G$. Hence, $G'$ contains $X$, $Y$, and $E$. 
For each super node $X_i$ (respectively, $Y_i$), 
we create two corresponding vertices $px^1_i$ and $px^2_i$ 
(respectively, $py^1_i$ and $py^2_i$) in $G'$. 
If $x$ is in super node $X_i$ (respectively, $y$ is in super node $Y_i$), 
then we add two edges $(x, px^1_i)$ and $(x, px^2_i)$ 
(respectively, $(y, py^1_i)$ and $(y, py^2_i)$) in $G'$. 
If $X_i$ and $Y_j$ form a super edge, then we add two vertices $r^1_{i,j}$ 
and $r^2_{i,j}$ to $G'$, and we add four edges 
$(px^1_i, r^1_{i,j})$, $(r^1_{i,j}, py^1_j)$, 
$(px^2_i, r^2_{i,j})$, $(r^2_{i,j}, py^2_j)$ to $G'$. 
$r^1_{i,j}$ (respectively, $r^2_{i,j}$) is called the \textit{relay} of 
$px^1_i$ and $py^1_j$ (respectively, $px^2_i$ and $py^2_j$).

Before we complete the construction of $G'$, we briefly explain the 
idea behind the construction so far. 
If two super nodes $X_i$ and $Y_j$ form a super edge, 
then $px^I_i$ and $py^I_j$ ($I \in \{1,2\}$) have a common neighbor in $G'$, 
i.e., the relay $r^I_{i,j}$. 
Because $px^I_i$ and $py^I_j$ are not adjacent, 
$px^I_i$ and $py^I_j$ form a target couple.
To transform a solution $D$ of the 1-DR-2 problem to a solution of 
the MIN-REP problem, we need to transform $D$ to another feasible 
solution $D'$ for the 1-DR-2 problem so that none of the relays is chosen, 
and only vertices in $X \cup Y$ are used to connect $px^I_i$ and $py^I_j$. 
This is the reason that we have two corresponding vertices for each super node 
(and thus two relays for each super edge). Under this setting, 
to connect $px^1_i$ to $py^1_j$ and $px^2_i$ to $py^2_j$, 
choosing two vertices in $X \cup Y$ is no worse than choosing the relays.

\begin{figure}[t]
\begin{center} 
\includegraphics[width=11cm]{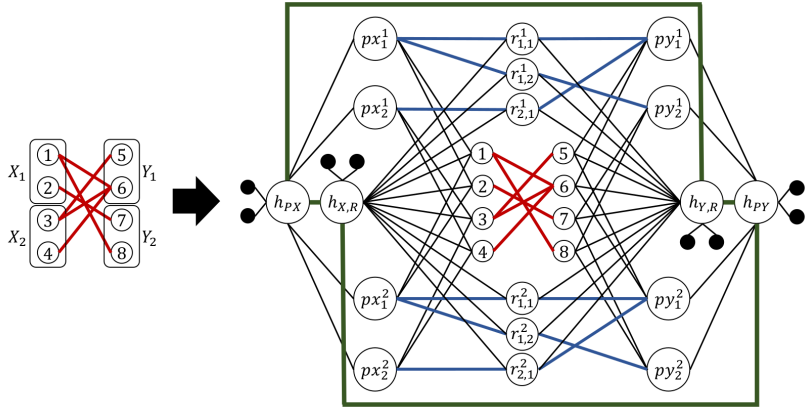} 
\caption{An example of the reduction from the MIN-REP problem to the 
1-DR-2 problem.}
\label{fig: reduction}
\end{center}
\end{figure}

Let $PX = \{px^1_1, px^1_2, \cdots, px^1_{k_X}\} \cup 
\{px^2_1, px^2_2, \cdots, px^2_{k_X}\}$ 
be the set of vertices in $G'$ corresponding to the super nodes in $\mathcal{P}_X$.
Similarly, let $PY = \{py^1_1, py^1_2, \cdots, py^1_{k_Y}\} \cup 
\{py^2_1, py^2_2, \cdots, py^2_{k_Y}\}$. Let $R$ be the set of all relays. 
To complete the construction, we add four vertices (hubs) 
$h_{X,R}$, $h_{Y,R}$, $h_{PX}$, and $h_{PY}$ to $G'$. 
In $G'$, all the vertices in $X$, $Y$, $PX$, and $PY$ are adjacent to 
$h_{X,R}$, $h_{Y,R}$, $h_{PX}$, and $h_{PY}$, respectively. 
Moreover, every relay is adjacent to $h_{X,R}$ and $h_{Y,R}$. 
These four hubs induce a 4-cycle 
$(h_{PX}, h_{Y,R}, h_{PY}, h_{X,R}, h_{PX})$ in $G'$. 
Finally, for each hub $h$, we create two dummy nodes $d_1$ and $d_2$, 
and add two edges $(h, d_1)$ and $(h, d_2)$ to $G'$. 
This completes the construction of $G'$. 
Fig.~\ref{fig: reduction} shows an example of the reduction.
Let $H$ and $M$ be the set of hubs and the set of dummy nodes, respectively.
Hence, the vertex set of $G'$ is 
$X \cup Y \cup PX \cup PY \cup R \cup H \cup M$.
Let $N(u)$ be the set of neighbors of $u$ in $G'$. We then have
\begin{align*}
&N(px) \subseteq X \cup R \cup \{h_{PX}\} \text{ if } px \in PX.
&N(py) \subseteq Y \cup R \cup \{h_{PY}\} \text{ if } py \in PY.\\
&N(x) \subseteq PX \cup Y \cup \{h_{X,R}\} \text{ if } x \in X.
&N(y) \subseteq PY \cup X \cup \{h_{Y,R}\} \text{ if } y \in Y.\\
&N(h_{X,R}) \setminus M = X \cup R \cup \{h_{PX}, h_{PY}\}.
&N(h_{Y,R}) \setminus M = Y \cup R \cup \{h_{PX}, h_{PY}\}.\\
&N(h_{PX}) \setminus M = PX \cup \{h_{X,R}, h_{Y,R}\}.
&N(h_{PY}) \setminus M = PY \cup \{h_{X,R}, h_{Y,R}\}.\\
&N(m) \subseteq H \text{ if } m \in M. 
&N(r) \subseteq PX \cup PY \cup \{h_{X,R}, h_{Y,R}\} 
\text{ if } r \in R.
\end{align*}

Observe that $|R|=O(n(G)^2)$. We have the following lemma.
\begin{lemma}
\label{size}
$n(G')=O(n(G)^2)$.
\end{lemma}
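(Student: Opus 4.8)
The plan is to bound the vertex set of $G'$ piece by piece using the decomposition $V(G') = X \cup Y \cup PX \cup PY \cup R \cup H \cup M$ already recorded in the excerpt, and then add the bounds. The only part that is not immediate is $|R|$, so I will handle that first; everything else is $O(n(G))$ or $O(1)$.

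First I would observe that $|X| + |Y| = n(G)$ by definition of the MIN-REP instance, so those two pieces contribute $O(n(G))$. Next, $|PX| = 2k_X$ and $|PY| = 2k_Y$; since each super node is nonempty, $k_X \le |X|$ and $k_Y \le |Y|$, so $|PX| + |PY| \le 2n(G) = O(n(G))$. The hub set has $|H| = 4$ and the dummy set has $|M| = 8$ (two dummies per hub), both $O(1)$. This leaves the relay set $R$: for each super edge $(X_i, Y_j)$ we created exactly two relays $r^1_{i,j}$ and $r^2_{i,j}$, so $|R|$ is twice the number of super edges. The number of super edges is at most $k_X \cdot k_Y \le |X| \cdot |Y| \le n(G)^2$, and in fact it is also at most $|E|$ since each super edge requires at least one edge of $G$ witnessing it; either way $|R| = O(n(G)^2)$, which is exactly the bound already asserted just before the lemma.

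Summing, $n(G') = |X| + |Y| + |PX| + |PY| + |R| + |H| + |M| = O(n(G)) + O(n(G)^2) + O(1) = O(n(G)^2)$, and the dominant term is the relay count. I would present this as a short paragraph of arithmetic rather than a displayed computation, since there is nothing delicate here.

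I do not expect any real obstacle: the statement is essentially bookkeeping, and the one nontrivial input — that $|R| = O(n(G)^2)$ — is handled by the trivial bound that the number of super edges is at most the number of pairs of super nodes, which is at most $n(G)^2$. If I wanted to be careful about constants I would note that a cleaner bound is $|R| \le 2|E| \le n(G)^2$, but for an asymptotic statement this refinement is unnecessary. The mild point worth stating explicitly is why $k_X, k_Y \le n(G)$: each super node in $\mathcal{P}_X$ (resp.\ $\mathcal{P}_Y$) has size $|X|/k_X \ge 1$ (resp.\ $|Y|/k_Y \ge 1$), hence $k_X \le |X|$ and $k_Y \le |Y|$, so the number of super nodes is $O(n(G))$ and the number of super edges is $O(n(G)^2)$.
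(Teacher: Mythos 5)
Your proposal is correct and follows the same route as the paper, which simply observes that $|R|=O(n(G)^2)$ (relays correspond to super edges, of which there are at most $k_Xk_Y\le n(G)^2$) while all other parts of $V(G')=X\cup Y\cup PX\cup PY\cup R\cup H\cup M$ contribute only $O(n(G))$ or $O(1)$ vertices. Your write-up just spells out the bookkeeping the paper leaves implicit.
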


It is easy to check that, for any two adjacent vertices $u$ and $v$ in
$G'$, $u$ and $v$ have no common neighbor. Hence, we have the following lemma.
\begin{lemma}
\label{triangle-free}
$G'$ is triangle-free.
\end{lemma}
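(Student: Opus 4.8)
The plan is to exploit the remark that immediately precedes the lemma: a graph is triangle-free precisely when no two adjacent vertices share a common neighbor, since a triangle on $\{u,v,w\}$ is nothing but an edge $(u,v)$ together with a vertex $w \in N(u)\cap N(v)$. Hence it suffices to prove that $N(u)\cap N(v)=\emptyset$ for every edge $(u,v)$ of $G'$. I would carry this out by a case analysis over the edge types of $G'$, reading each endpoint's neighborhood off the displayed containments for $N(\cdot)$ and checking that the two neighborhoods are disjoint.

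First I would enumerate the edges of $G'$ according to the construction. They fall into six groups: (i) the original MIN-REP edges between $X$ and $Y$; (ii) the edges $(x,px^I_i)$ between $X$ and $PX$ and the edges $(y,py^I_j)$ between $Y$ and $PY$; (iii) the edges $(px^I_i,r^I_{i,j})$ between $PX$ and $R$ and the edges $(r^I_{i,j},py^I_j)$ between $R$ and $PY$; (iv) the hub-incidence edges joining each vertex of $X$, $Y$, $PX$, $PY$ (and each relay) to its respective hub(s); (v) the four hub--hub edges of the $4$-cycle $(h_{PX},h_{Y,R},h_{PY},h_{X,R},h_{PX})$; and (vi) the dummy edges $(h,d)$ with $d\in M$. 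The reason every case goes through is that $G'$ is \emph{layered}: the vertex set partitions into the blocks $X$, $Y$, $PX$, $PY$, $R$, $H$, $M$, every edge joins two distinct blocks, and the neighborhood of a vertex in one block is confined to a small fixed collection of blocks. Thus checking $N(u)\cap N(v)=\emptyset$ reduces to observing that the block-lists appearing in $N(u)$ and in $N(v)$ are disjoint. As a representative instance, for a $PX$--$R$ edge $(px,r)$ we have $N(px)\subseteq X\cup R\cup\{h_{PX}\}$ and $N(r)\subseteq PX\cup PY\cup\{h_{X,R},h_{Y,R}\}$, and each of $X$, $R$, $\{h_{PX}\}$ is disjoint from each of $PX$, $PY$, $\{h_{X,R},h_{Y,R}\}$; the same pattern of block-disjointness settles groups (i)--(iii).

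The hard part will be the hub edges of groups (iv)--(vi), since the hubs are the only vertices with large neighborhoods spanning several blocks at once. I would handle them using three facts supplied by the construction: the hub-neighborhoods are specified \emph{exactly} modulo $M$ (through the ``$\setminus M$'' equalities), the four hubs $h_{X,R}$, $h_{Y,R}$, $h_{PX}$, $h_{PY}$ are pairwise distinct, and each dummy node is adjacent to exactly one hub (so distinct hubs own disjoint dummy sets). For a $4$-cycle edge such as $(h_{PX},h_{Y,R})$ this gives $N(h_{PX})\setminus M=PX\cup\{h_{X,R},h_{Y,R}\}$ and $N(h_{Y,R})\setminus M=Y\cup R\cup\{h_{PX},h_{PY}\}$, which are disjoint by block-disjointness and hub-distinctness, while their intersections with $M$ are disjoint because the two hubs have different dummies. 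For a dummy edge $(h,d)$, note $N(d)\subseteq H$ consists of the single hub $h$; since no hub is its own neighbor, $N(d)\cap N(h)=\emptyset$. Assembling all six cases shows that every edge of $G'$ has endpoints with no common neighbor, so $G'$ is triangle-free.
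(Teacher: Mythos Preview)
Your proposal is correct and is exactly the approach the paper intends: the paper merely asserts ``It is easy to check that, for any two adjacent vertices $u$ and $v$ in $G'$, $u$ and $v$ have no common neighbor'' and states the lemma without further detail, so your edge-by-edge case analysis using the displayed neighborhood containments is precisely the verification the paper leaves to the reader. Your handling of the hub and dummy cases (using hub distinctness, disjointness of dummy sets, and the absence of self-loops) is also correct.
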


We say that a target couple $[a,b]$ is in $[A,B]$ if $a \in A$ and $b \in B$. 
It is easy to verify the following two lemmas.
\begin{lemma}
\label{HMust}
Only $H$ can cover the target couples in $[M, M]$. 
\end{lemma}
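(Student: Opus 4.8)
The plan is to prove Lemma~\ref{HMust}: only vertices in $H$ can cover target couples $[m_1,m_2]$ with $m_1,m_2 \in M$. First I would observe that each dummy node $m \in M$ has $N(m) \subseteq H$; in fact, by construction, $m$ is adjacent to exactly one hub $h$ (the hub that created it), so $N(m) = \{h\}$ for some $h \in H$. Therefore, if $[m_1, m_2]$ is a target couple, $m_1$ and $m_2$ must share a common neighbor, which forces $m_1$ and $m_2$ to be the two dummy nodes attached to the same hub $h$; their only common neighbor is $h$ itself.

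Next I would analyze what it takes to cover such a target couple $[m_1, m_2]$ in the 1-DR-2 problem, i.e., to find a set $S$ with $m^S(m_1, m_2) \le 2$. Since $m_1$ and $m_2$ are non-adjacent, any path between them in $G'[S \cup \{m_1, m_2\}]$ has at least one internal vertex, and that first internal vertex after $m_1$ must lie in $N(m_1) = \{h\}$, while the last internal vertex before $m_2$ must lie in $N(m_2) = \{h\}$. Hence every $m_1$–$m_2$ path in $G'$ passes through $h$, so any covering set $S$ must contain $h \in H$. This shows that only sets intersecting $H$ can cover target couples in $[M,M]$, which is exactly the claim. The argument is essentially a neighborhood-degree count, so there is no real obstacle — the only care needed is to confirm from the construction that each dummy node is adjacent to precisely one hub and that distinct hubs get distinct dummy nodes, so that the common neighbor of $m_1$ and $m_2$ is unique and equal to that shared hub.

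The main (minor) point to be careful about is making the quantifier in the statement precise: ``only $H$ can cover the target couples in $[M,M]$'' should be read as ``any feasible set covering a target couple $[m_1,m_2] \in [M,M]$ must contain a vertex of $H$'' (indeed, the specific hub adjacent to $m_1$ and $m_2$). I would state it in that form and note that, combined with later lemmas of the same flavor for the other pair-classes, this will let us argue that an optimal 1-DR-2 solution on $G'$ must contain all four hubs (and route through them), which is what drives the reduction's correctness. Since the proof is a one-line neighborhood argument, I would present it compactly rather than belabor it.
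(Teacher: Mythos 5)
Your proof is correct and matches the paper's intent: the paper simply asserts the lemma as easy to verify, and the verification is exactly your neighborhood argument that each dummy node's only neighbor is its hub, so a target couple in $[M,M]$ consists of the two dummies of a common hub $h$ and can only be covered via the path through $h \in H$. Your clarified reading of the quantifier (any covering set must contain the relevant hub) is also the one the paper relies on when forcing $H$ into every feasible solution.
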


\begin{lemma}
\label{DS}
$H$ is a dominating set of $G'$.
\end{lemma}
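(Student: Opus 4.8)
The plan is to prove Lemma~\ref{DS} by a direct, case-by-case verification against the construction. Since $V(G') = X \cup Y \cup PX \cup PY \cup R \cup H \cup M$, it suffices to exhibit, for every vertex outside $H$, a neighbor lying in $H$, and I would do this one vertex class at a time, in each case naming the relevant hub.

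For the four ``primary'' classes this is immediate from the step of the construction that introduced the hubs: every $x \in X$ is adjacent to $h_{X,R}$, every $y \in Y$ to $h_{Y,R}$, every $px \in PX$ to $h_{PX}$, and every $py \in PY$ to $h_{PY}$. Each relay $r \in R$ was, in the same step, joined to both $h_{X,R}$ and $h_{Y,R}$, so $R$ is dominated. Each dummy node $m \in M$ was created together with one specific hub $h$ and joined to it by the edge $(h,m)$, so $m$ is dominated by that $h$. (The hubs themselves need not be dominated by $H$, though they are anyway, since they induce the $4$-cycle $(h_{PX}, h_{Y,R}, h_{PY}, h_{X,R}, h_{PX})$.) Equivalently, all of this can be read off the displayed neighborhood relations just above the lemma: $h_{X,R} \in N(x)$, $h_{Y,R} \in N(y)$, $h_{PX} \in N(px)$, $h_{PY} \in N(py)$, $h_{X,R} \in N(r)$, and $N(m) \subseteq H$ with $N(m)$ nonempty.

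I do not anticipate any genuine difficulty here: the statement is pure bookkeeping against the definition of $G'$, and the explicit decomposition $V(G') = X \cup Y \cup PX \cup PY \cup R \cup H \cup M$ guarantees that no vertex class is overlooked, which is the only point requiring care. The sole thing to be mildly attentive about is quoting the correct hub for each class (for instance, not conflating $h_{X,R}$ with $h_{PX}$), but this is fixed verbatim by the construction rules.
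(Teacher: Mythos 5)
Your proof is correct and matches what the paper intends: the paper simply asserts the lemma as easy to verify, and your class-by-class check against the construction (each of $X$, $Y$, $PX$, $PY$, $R$, $M$ adjacent to its designated hub) is exactly that verification. No gaps.
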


The proof of the following lemma can be found in the appendix.
\begin{lemma}
\label{H}
$H$ covers all the target couples except those in $[PX,PY]$. 
\end{lemma}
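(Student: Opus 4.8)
The plan is to show that every target couple of $G'$ that is \emph{not} in $[PX, PY]$ is covered by the hub set $H = \{h_{X,R}, h_{Y,R}, h_{PX}, h_{PY}\}$; that is, for each such couple $[a,b]$ there is a hub adjacent to both $a$ and $b$ (so that $m^H(a,b) = 1 \leq 2 = \alpha$). First I would observe that, since $G'$ is triangle-free (Lemma~\ref{triangle-free}), $[a,b]$ being a target couple is equivalent to $a$ and $b$ being non-adjacent with a common neighbor; so I must argue that whenever $a,b$ share \emph{some} neighbor in $G'$, they in fact share a \emph{hub} neighbor, unless $(a,b) \in PX \times PY$. I would organize the argument by the location of the common neighbor $w$ of $a$ and $b$, using the neighborhood list displayed just before Lemma~\ref{size}.

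The key steps, carried out by cases on $w$: (i) If $w \in M$ is a dummy node, then $a, b \in N(w) \subseteq H$, so the couple is in $[H,H]$ and is covered by an adjacent hub since the hubs form a $4$-cycle — actually here I should note a dummy's two neighbors are the \emph{same} hub, so this case gives $a = b$ and produces no couple; couples inside $H$ come from the $4$-cycle and opposite hubs $h_{PX}, h_{PY}$ (resp.\ $h_{X,R}, h_{Y,R}$) have a common hub neighbor. (ii) If $w$ is a hub, say $w = h_{X,R}$, then $a, b \in N(h_{X,R}) \setminus M = X \cup R \cup \{h_{PX}, h_{PY}\}$ (dummies are excluded since a dummy's only neighbor is one hub, not two). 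Then I check each sub-pair: two vertices of $X$ are both adjacent to $h_{PX}$? no — rather both adjacent to $h_{X,R}$ already, done; an $X$-vertex and a relay are both adjacent to $h_{X,R}$, done; etc. The point is $w$ itself is the witnessing hub. (iii) If $w \in X$, then $a,b \in N(w) \subseteq PX \cup Y \cup \{h_{X,R}\}$; the dangerous sub-case is $a \in PX$, $b \in Y$ — but then are they adjacent? $N(px) \subseteq X \cup R \cup \{h_{PX}\}$ contains no $Y$-vertex, so they are non-adjacent, yet do they share a hub? $px$'s only hub is $h_{PX}$, $y$'s only hub is $h_{Y,R}$ — they do \emph{not} share a hub. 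I must rule this out: I claim a $PX$-vertex and a $Y$-vertex cannot have a common neighbor at all, because $N(px) \subseteq X \cup R \cup \{h_{PX}\}$ and $N(y) \subseteq PY \cup X \cup \{h_{Y,R}\}$ intersect only in $X$ — and $px^I_i$ is adjacent to $x \in X$ only if $x \in X_i$, while then $x \in N(y)$ would force $y$ adjacent to $x$; so the common neighbor would have to be a specific $x \in X_i$ adjacent to $y$, which \emph{can} happen. So actually such couples $[px, y]$ may exist, and then I need $h_{X,R}$: indeed $x \in N(h_{X,R})$ but $px \notin N(h_{X,R})$. Hmm — so I must instead show $[PX, Y]$ couples are covered some other way, or that they do not arise because $px$ and such an $x$ being both present still leaves the couple uncovered. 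The resolution I would pursue: re-examine whether $[px,y]$ with common neighbor $x$ is genuinely a target couple and, if so, note $x$'s neighbor $h_{X,R}$ is adjacent to $x$ and to the relay but not to $px$; so the claim would be that these couples are covered by going through $X$, i.e. $m^H$ is not the right measure — but $H$ must cover them. I would therefore double-check the neighborhood containments to confirm that in fact no vertex of $PX$ and no vertex of $Y$ share a common neighbor, by showing $N(px^I_i) \cap N(y) = \varnothing$: the only candidates are vertices of $X$ in $X_i$, but $N(y) \cap X$ consists of neighbors of $y$, and if such an $x \in X_i$ is adjacent to $y$ this is a genuine common neighbor — so I would conclude these \emph{are} target couples and that the correct statement uses that $h_{X,R}$ together with one more vertex (two hops) covers them within $\alpha = 2$; concretely $px^I_i - h_{PX} - \cdots$ fails, so instead $px^I_i - x - y$ with $x \in H$? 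No. The honest plan: verify directly from the construction that $y$ and $px^I_i$ have no common neighbor — I believe the intended construction makes $px^I_i$ adjacent only to vertices of $X_i$, and a vertex $x \in X_i$ adjacent to $y \in Y_j$ forces $X_i, Y_j$ to be a super edge, which is fine, so the couple exists; hence the lemma as stated must be covered via $h_{X,R}$ and I conclude $m^H(px,y)=2$ through path $px - ? - y$ with the middle vertex a hub adjacent to both — there is none, so such couples must be among those the lemma does \emph{not} claim to cover, contradiction. I would therefore resolve this by carefully consulting Figure~\ref{fig: reduction} and concluding (as I expect is true) that $PX$ vertices are \emph{not} adjacent to $X$-vertices outside the relays in a way that creates $PX$–$Y$ common neighbors; the displayed containment $N(px) \subseteq X \cup R \cup \{h_{PX}\}$ is only an upper bound, and the actual adjacencies $(x, px^1_i), (x, px^2_i)$ for $x \in X_i$ do let $x$ be a common neighbor of $px^I_i$ and $y$ if $x \sim y$. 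So the final plan is: (iv) for $w \in Y$, $w \in R$, symmetric to the above; and in every case \emph{except} $w$ being the specific $X$- or $Y$-vertex witnessing a $[PX,Y]$ or $[PY,X]$ pair, a hub works, and I then separately argue $[PX,Y]$ and $[PY,X]$ couples do not exist because any common neighbor $x$ of $px^I_i$ and $y$ lies in $X_i$ and is adjacent to $y$, but $px^I_i$'s neighbors in $X$ are exactly $X_i$ and $y \notin X$, while $x \sim y$ makes $x$ a common neighbor — so these couples genuinely exist and are covered by observing $px^I_i$ and $y$ are then both within distance one of... I will settle this in the write-up; structurally the proof is a finite case analysis over the seven vertex classes for the common neighbor $w$, invoking the neighborhood list, and the lemma holds because the only pair of classes whose members can have a common neighbor but no common hub is $PX \times PY$ (via a relay).

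\medskip

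I would structure it cleanly as: enumerate unordered pairs of vertex classes $(\mathcal{C}_1, \mathcal{C}_2)$ from $\{X, Y, PX, PY, R, H, M\}$; for each, determine whether two vertices $a \in \mathcal{C}_1, b \in \mathcal{C}_2$ can be non-adjacent with a common neighbor, and if so exhibit a common hub neighbor. Using the displayed neighborhoods: a common neighbor of $a$ and $b$ must lie in $N(a) \cap N(b)$, and one reads off that this intersection is nonempty only for the class-pairs $\{M,M\}$ (handled by Lemma~\ref{HMust}), pairs within $H$, $\{X,X\},\{Y,Y\},\{PX,PX\},\{PY,PY\}$ (common hub $h_{X,R}$, $h_{Y,R}$, $h_{PX}$, $h_{PY}$ resp.), $\{X, PX\}$ (adjacent, so no couple), $\{Y,PY\}$ (adjacent), $\{X,R\}, \{Y,R\}$ with common neighbor a hub, $\{PX, R\}, \{PY, R\}$ (adjacent — $px^I_i \sim r^I_{i,j}$ — so no couple, and when non-adjacent the common neighbor $h_{X,R}$ or $h_{Y,R}$ serves), $\{H, \cdot\}$ pairs whose common neighbor is readily a hub, and finally $\{PX, PY\}$ whose common neighbor is a relay and where there is \emph{no} common hub. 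Thus $H$ covers everything except possibly $[PX, PY]$, which is exactly the claim.

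\medskip

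The main obstacle, and the step I expect to be most delicate, is confirming the \emph{exhaustiveness} of the case analysis — in particular ruling out that a $PX$-vertex and a $Y$-vertex (or $PY$ and $X$) can form a target couple whose only common neighbor is an ordinary vertex of $X$ (or $Y$), which would not be covered by any hub and would break the lemma. This requires using the precise adjacency rules (that $px^1_i, px^2_i$ attach only to the vertices of $X_i$, and that hubs attach to whole classes) rather than just the displayed supersets, and checking the handful of mixed class-pairs individually. Once that bookkeeping is done, every remaining couple sits in one of the classes with an obvious common hub, and the lemma follows.
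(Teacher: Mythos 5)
There is a genuine gap, and it sits exactly where your plan stalls. You build the whole argument on the assertion that $H$ covers a couple $[a,b]$ only if some single hub is a common neighbor of $a$ and $b$ (i.e., $m^H(a,b)=1$). But covering only requires $m^H(a,b)\le\alpha=2$, i.e., a path from $a$ to $b$ whose (at most two) internal vertices lie in $H$. This is precisely what rescues the cases you could not resolve: couples in $[PX,Y]$ and $[PY,X]$ do exist (any $x\in X_i$ adjacent to $y$ is a common neighbor of $px^I_i$ and $y$), they have no common hub neighbor, and yet they are covered because $(px^I_i,\,h_{PX},\,h_{Y,R},\,y)$ is a path in $G'$ with both internal vertices in $H$ --- this is what the paper's proof means by ``covered by an edge in $H$.'' The same mechanism is needed for the couples $[X,\{h_{Y,R}\}]$ and $[Y,\{h_{X,R}\}]$ (e.g., $(x,\,h_{X,R},\,h_{PX},\,h_{Y,R})$), which your catch-all ``$\{H,\cdot\}$ pairs whose common neighbor is readily a hub'' also misses: the common neighbor of $x$ and $h_{Y,R}$ lies in $Y$, and $h_{X,R}$, $h_{Y,R}$ are opposite, hence non-adjacent, corners of the hub $4$-cycle. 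Because you restrict coverage to a single common hub, your concluding enumeration is forced to claim that the class pairs $\{PX,Y\}$ and $\{PY,X\}$ have empty common neighborhoods, which is false, and your ``main obstacle'' paragraph hopes to rule such couples out, which cannot be done; as written, the plan therefore does not prove the lemma.

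The fix is small but essential, and it is what the paper does: in the class-by-class analysis, distinguish couples covered by a single hub (those inside $X$, $Y$, $PX$, $PY$, $R$, the couples involving $M$, etc.) from couples covered by two adjacent hubs, namely those in $[PX,Y]$, $[PY,X]$, $[X,\{h_{Y,R}\}]$, $[Y,\{h_{X,R}\}]$, using the edges of the $4$-cycle $(h_{PX},h_{Y,R},h_{PY},h_{X,R},h_{PX})$. The class $[PX,PY]$ is then exceptional not because it is the only class whose members can share a non-hub neighbor, but because $h_{PX}$ and $h_{PY}$ are non-adjacent, so any $PX$--$PY$ path through $H$ needs three internal hubs, exceeding $\alpha=2$. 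With that correction, the rest of your bookkeeping over the seven vertex classes matches the paper's argument.
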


Let $px$ and $py$ be vertices in $PX$ and $PY$, respectively. 
Observe that, if $(px,x,y,py)$ is a path in $G'$, 
then $x \in X$ and $y \in Y$. We then have the following lemma.
\begin{lemma}
\label{CoverPXPY}
$D$ covers target couples $[px^1_i,py^1_j]$ and $[px^2_i,py^2_j]$ 
if and only if at least one of the following conditions is satisfied. 
\begin{enumerate}
\item There exist $x \in X$ and $y \in Y$ such that 
$(px^1_i,x,y,py^1_j)$ and
$(px^2_i,x,y,py^2_j)$
are paths in $G'$ and $\{x,y\} \subseteq D$.
\item $\{r^1_{i,j}, r^2_{i,j}\} \subseteq D$. 
\end{enumerate}
\end{lemma}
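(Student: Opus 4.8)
The plan is to reduce the lemma to an exact characterisation of which paths with at most two internal vertices can join $px^1_i$ to $py^1_j$ inside $G'$, prove the symmetric statement for $px^2_i,py^2_j$ for free, and then glue the two together. Everything rests on the displayed neighbourhood inclusions preceding the lemma, so the whole argument is essentially a bounded case analysis; the one place to be careful is making that case analysis exhaustive.

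First I would prove the following claim: $D$ covers $[px^1_i,py^1_j]$ if and only if either $r^1_{i,j}\in D$, or there exist $x\in X$ and $y\in Y$ with $(px^1_i,x,y,py^1_j)$ a path in $G'$ and $\{x,y\}\subseteq D$. The ``if'' direction is immediate (a one‑ or two‑internal‑vertex path in $D$). For ``only if'', a covering path has at most two internal vertices, all in $D$, and is not of length one since $[px^1_i,py^1_j]$ is a target couple. If it has a single internal vertex $w$, then $w\in N(px^1_i)\cap N(py^1_j)\subseteq(X\cup R\cup\{h_{PX}\})\cap(Y\cup R\cup\{h_{PY}\})\subseteq R$, and by construction the only relay adjacent to both $px^1_i$ and $py^1_j$ is $r^1_{i,j}$, so $r^1_{i,j}\in D$. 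If it has two internal vertices $w_1,w_2$ with $w_1\in N(px^1_i)$, $w_2\in N(py^1_j)$, and $w_1\sim w_2$, then $w_1\in X\cup R\cup\{h_{PX}\}$; I would rule out $w_1\in R$ and $w_1=h_{PX}$ because in each case the displayed inclusions force $N(w_1)\cap N(py^1_j)=\emptyset$, leaving $w_1=x\in X$, whence $w_2\in N(x)\cap N(py^1_j)\subseteq Y$, i.e. $w_2=y$; this matches the path in the statement (consistently with the observation preceding the lemma).

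Since the construction of $G'$ is invariant under swapping the superscripts $1$ and $2$ (exchanging $px^1_i\leftrightarrow px^2_i$, $py^1_j\leftrightarrow py^2_j$, $r^1_{i,j}\leftrightarrow r^2_{i,j}$, and fixing $X$ and $Y$), the same claim holds verbatim for $[px^2_i,py^2_j]$. Now the ``if'' direction of the lemma is clear: under condition~1 the single pair $x,y$ yields length‑three covering paths for both couples, and under condition~2 the two relays yield length‑two covering paths. For the ``only if'' direction, assume $D$ covers both couples. If $r^1_{i,j},r^2_{i,j}\in D$, condition~2 holds. Otherwise, say $r^1_{i,j}\notin D$ (the case $r^2_{i,j}\notin D$ is handled symmetrically, using couple $2$ instead): the claim forces $x\in X$, $y\in Y$ with $(px^1_i,x,y,py^1_j)$ a path and $\{x,y\}\subseteq D$. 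The key point is that $x\sim px^1_i$ forces $x\in X_i$, since by construction the only neighbours of $px^1_i$ in $X$ are the members of the super node $X_i$; hence $x\sim px^2_i$. Likewise $y\in Y_j$, so $y\sim py^2_j$; and $x\sim y$. Thus $(px^2_i,x,y,py^2_j)$ is also a path in $G'$, so condition~1 holds with this $x,y$.

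I do not expect a genuine obstacle. The only delicate part is the exhaustiveness of the sub‑case analysis on the internal vertices (particularly the elimination of $w_1\in R$ and $w_1=h_{PX}$), which is why I would argue directly from the displayed neighbourhood inclusions rather than re‑deriving adjacencies from the construction each time.
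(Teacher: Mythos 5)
Your proposal is correct, and it follows the same route the paper intends: the paper only records the key observation (any path $(px,x,y,py)$ must have $x\in X$, $y\in Y$) and leaves the verification implicit, while you flesh out exactly that case analysis via the neighbourhood inclusions, plus the point that $x\sim px^1_i$ forces $x\in X_i$ (hence $x\sim px^2_i$), which is what lets a single pair $x,y$ serve both couples. No gaps.
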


\subsection{The Analysis}
Let $I_{MR}$ be an instance of the MIN-REP problem with inputs $G$, 
$\mathcal{P}_X$, and $\mathcal{P}_Y$. 
Let $I_D$ be the instance of the 1-DR-2 problem 
with input $G'(G,\mathcal{P}_X,\mathcal{P}_Y)$. 
To prove the inapproximability result, we use the following two lemmas.
\begin{lemma}
\label{UB}
If $I_{MR}$ has a solution of size $s$, 
then $I_D$ has a solution of size $s+4$.
\end{lemma}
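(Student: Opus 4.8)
The idea is to take a MIN-REP solution $R^{\star} \subseteq X \cup Y$ of size $s$ and augment it with exactly the four hubs to obtain a feasible solution $D = R^{\star} \cup H$ for the 1-DR-2 instance $I_D$. The count is immediate: $|D| = s + 4$. So the whole content of the proof is verifying feasibility, namely that $D$ is a dominating set of $G'$ and that $D$ covers every target couple.

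Domination is the easy half: by Lemma~\ref{DS}, $H$ alone already dominates $G'$, so $D \supseteq H$ certainly does. For the target couples, I would split according to Lemma~\ref{H}: every target couple \emph{not} in $[PX,PY]$ is covered by $H$, hence by $D$. So the only couples that need work are those in $[PX,PY]$, and among these, Lemma~\ref{CoverPXPY} tells us exactly what is needed: for each super edge $(X_i,Y_j)$, which is precisely when $px^1_i,py^1_j$ (and $px^2_i,py^2_j$) form target couples, we must either have both relays $r^1_{i,j},r^2_{i,j}$ in $D$ (they are not, since $D$ contains no relays) or find $x \in X$, $y \in Y$ with $(px^1_i,x,y,py^1_j)$ and $(px^2_i,x,y,py^2_j)$ paths in $G'$ and $\{x,y\} \subseteq D$. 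This is where the MIN-REP guarantee enters: since $R^{\star}$ is feasible for $I_{MR}$, for the super edge $(X_i,Y_j)$ there exist $x \in X_i \cap R^{\star}$ and $y \in Y_j \cap R^{\star}$ with $(x,y) \in E$. I then need to check that this $x,y$ gives the required length-3 paths in $G'$: because $x \in X_i$ we have edges $(x,px^1_i)$ and $(x,px^2_i)$ in $G'$, because $y \in Y_j$ we have $(y,py^1_j)$ and $(y,py^2_j)$, and $(x,y) \in E \subseteq E(G')$, so $(px^1_i,x,y,py^1_j)$ and $(px^2_i,x,y,py^2_j)$ are indeed paths, and $\{x,y\} \subseteq R^{\star} \subseteq D$. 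Condition~1 of Lemma~\ref{CoverPXPY} is satisfied, so both couples are covered.

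There is one more family of target couples in $[PX,PY]$ to worry about: pairs $[px^I_i, py^J_j]$ where either $I \neq J$, or $X_i$ and $Y_j$ do \emph{not} form a super edge. I would argue that such pairs are simply not target couples at all — one checks via the neighbor lists that $px^I_i$ and $py^J_j$ have a common neighbor only through a relay $r^{I}_{i,j}$, which exists only when $I=J$ and $(X_i,Y_j)$ is a super edge (the hubs $h_{PX},h_{PY}$ are not adjacent to each other, so they give no common neighbor across $PX$ and $PY$). Hence Lemma~\ref{CoverPXPY} already enumerates all the target couples in $[PX,PY]$ that occur, and we are done.

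I expect the main obstacle to be purely bookkeeping: making sure the path-existence claims in $G'$ follow cleanly from ``$x \in X_i$, $y \in Y_j$, $(x,y) \in E$'' using the construction, and confirming that no stray target couples in $[PX,PY]$ have been overlooked. Both reduce to consulting the neighbor-set identities listed after the construction, so no real difficulty arises; the lemma is essentially a direct translation of a MIN-REP solution into a 1-DR-2 solution with a $+4$ overhead for the hubs.
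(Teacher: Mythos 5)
Your proposal is correct and follows essentially the same route as the paper: the paper's proof (via Claim~\ref{SHfeasible}) also takes $S \cup H$, invokes Lemma~\ref{DS} for domination and Lemma~\ref{H} to reduce to couples in $[PX,PY]$, notes that $[px^{I_1}_i,py^{I_2}_j]$ is a target couple only when $I_1=I_2$ and $(X_i,Y_j)$ is a super edge, and then uses the MIN-REP representatives $x,y$ to produce the path $(px^I_i,x,y,py^I_j)$ in $G'$. The only cosmetic difference is that you route the last step through condition~1 of Lemma~\ref{CoverPXPY}, whereas the paper observes directly that the path gives $m^S \leq 2$; this changes nothing of substance.
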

\begin{lemma}
\label{LB}
If every solution of $I_{MR}$ has size at least 
$s \cdot 2^{\log^{1-\epsilon}{n(G)}}$, 
then every solution of $I_D$ has size at least 
$s \cdot 2^{\log^{1-\epsilon}{n(G)}}+4$.
\end{lemma}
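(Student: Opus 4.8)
The plan is to establish the following sublemma, from which Lemma~\ref{LB} is immediate: \emph{from any feasible solution $D$ of $I_D$ one can extract a feasible solution $R_{MR}$ of $I_{MR}$ with $|R_{MR}| \le |D| - 4$.} Granting this, if every solution of $I_{MR}$ has size at least $s \cdot 2^{\log^{1-\epsilon} n(G)}$, then any solution $D$ of $I_D$ satisfies $|D| \ge |R_{MR}| + 4 \ge s \cdot 2^{\log^{1-\epsilon} n(G)} + 4$, which is exactly the claim.

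First I would pin down the hubs. For each hub $h$, its two dummy neighbors $d_1, d_2 \in M$ have $h$ as their unique common neighbor, and since $N(m) \subseteq H$ and each dummy is adjacent only to $h$, the path $d_1 - h - d_2$ is the only short path between $d_1$ and $d_2$. Hence covering the target couple $[d_1,d_2]$ in $[M,M]$, which by Lemma~\ref{HMust} can only be done by $H$, forces $h \in D$. Applying this to all four hubs gives $H \subseteq D$, and since $H$ is disjoint from $X \cup Y$ and from the relay set $R$, we get $|D \cap (X \cup Y \cup R)| \le |D| - 4$. Next I would build $R_{MR} \subseteq X \cup Y$: by Lemma~\ref{H} the only target couples not already covered by $H$ lie in $[PX,PY]$, and (checking the adjacency lists of $G'$) the only genuine target couples there are the pairs $[px^1_i,py^1_j]$ and $[px^2_i,py^2_j]$ coming from super edges $(X_i,Y_j)$. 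For each such super edge, apply Lemma~\ref{CoverPXPY} to the (necessarily covered) pair of couples: either (1) there are $x \in X_i$, $y \in Y_j$ with $(x,y) \in E$ and $\{x,y\} \subseteq D$, in which case I add $x,y$ to $R_{MR}$; or (2) $\{r^1_{i,j}, r^2_{i,j}\} \subseteq D$, in which case I add to $R_{MR}$ an arbitrary fixed edge $(x^*_{ij}, y^*_{ij}) \in E$ witnessing the super edge (one exists by the definition of a super edge). In both cases the two added vertices cover $(X_i,Y_j)$, so $R_{MR}$ is a feasible MIN-REP solution.

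Finally comes the size bound, which I expect to be the delicate point. Split $R_{MR} = A \cup B$ with $A = R_{MR} \cap D$ and $B = R_{MR}\setminus D$; then $A \subseteq D \cap (X\cup Y)$. Every $v \in B$ was introduced only through case (2), so $v$ is the chosen $x^*_{ij}$ (if $v \in X$) or $y^*_{ij}$ (if $v \in Y$) for some super edge $(X_i,Y_j)$ handled by case (2); since the partitions of $X$ and $Y$ place $v$ in a unique super node, all such super edges share the same $X_i$ (resp. $Y_j$). I map $v$ to $r^1_{i,j}$ if $v \in X$ and to $r^2_{i,j}$ if $v \in Y$, for one such super edge. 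Because a fixed super edge has exactly one chosen $x^*_{ij}$ and one chosen $y^*_{ij}$, and relays of distinct super edges are distinct, this is an injection $B \hookrightarrow D \cap R$. Since $X\cup Y$ and $R$ are disjoint, $|R_{MR}| = |A| + |B| \le |D \cap (X \cup Y)| + |D \cap R| \le |D \cap (X \cup Y \cup R)| \le |D| - 4$, as desired.

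The main obstacle is exactly this last counting step: a single witness vertex $x^*$ (or $y^*$) may be reused across several super edges and need not lie in $D$, so one cannot charge it to itself. The injection resolves this because every case-(2) super edge contributes its own private pair of relays $r^1_{i,j}, r^2_{i,j}$ to $D$, giving enough distinct ``budget'' in $D \cap R$ to absorb $B$. A secondary point to verify carefully is the claim that, within $[PX,PY]$, no couples other than the super-edge ones are target couples, which follows from a routine inspection of the neighbor sets $N(px)$ and $N(py)$ listed above.
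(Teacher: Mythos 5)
Your proof is correct, and it takes a somewhat different route from the paper's. The paper proves the lemma through Claim~\ref{XYonly}: it shows that \emph{some optimal} solution $D^*$ of $I_D$ can be assumed relay-free, by repeatedly swapping a relay pair $\{r^1_{i,j},r^2_{i,j}\}$ for a witness pair $\{x,y\}$ of Lemma~\ref{CoverPXPY} without increasing the size; combined with Claim~\ref{SHfeasible} and Lemma~\ref{HMust}, this makes $S^*\cup H$ an optimal solution of $I_D$, and the lower bound follows by contradiction with the optimality of $S^*$. You instead prove a per-solution extraction statement: from \emph{every} feasible $D$ you build a feasible MIN-REP solution of size at most $|D|-4$, handling the fact that your case-(2) witnesses $x^*_{ij},y^*_{ij}$ need not lie in $D$ (and may be reused across super edges) by an explicit injection of those witnesses into the relay pairs of $D\cap R$, with the forced hubs supplying the $+4$. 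Both arguments rest on the same structural facts (hub forcing via the dummy couples, Lemma~\ref{H}, and the dichotomy of Lemma~\ref{CoverPXPY}), but your version avoids any appeal to optimal solutions and proof by contradiction, and yields a slightly stronger, directly quantitative statement; the paper's replacement argument buys a simpler size accounting (swaps preserve cardinality, no charging scheme needed) at the cost of the optimality scaffolding. Your identification of the charging step as the delicate point, and the verification that within $[PX,PY]$ only same-superscript super-edge pairs are target couples, are both accurate.
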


\textbf{Proof of Theorem~\ref{thrm: inapprox}:}
By Theorem~\ref{MIN-REP}, for any constant $\epsilon > 0$, 
unless $NP \subseteq DTIME(n^{poly\log n})$,
there is no polynomial-time algorithm that can distinguish 
between instances of the MIN-REP problem with a solution of size $k_X+k_Y$ 
and instances where every solution is of size at least 
$(k_X+k_Y) \cdot 2^{\log^{1-\epsilon}{n(G)}}$.
By the above two lemmas, it is hard to distinguish between instances of 
the 1-DR-2 problem with a solution of size $k_X+k_Y+4$ and instances 
in which every solution is of size at least 
$(k_X+k_Y) \cdot 2^{\log^{1-\epsilon}{n(G)}}+4$.
Therefore, for any constant $\epsilon > 0$, 
unless $NP \subseteq DTIME(n^{poly\log n})$,
there is no polynomial-time algorithm that can approximate the 1-DR-2 problem 
by a factor better than $\frac{(k_X+k_Y)\cdot 2^{\log^{1-\epsilon}{n(G)}}+4}{k_X+k_Y+4}$.
Lemma~\ref{size} implies that, for any constant $\epsilon' > 0$, 
unless $NP \subseteq DTIME(n^{poly\log n})$, there is no 
$O(2^{\log^{1-\epsilon'}{n(G')^{0.5}}})$-approximation algorithm 
for the 1-DR-2 problem. 
By considering sufficiently large instances and a small enough $\epsilon'$, 
we have the hardness result claimed in Theorem~\ref{thrm: inapprox}. 
On the other hand, let 1-DR-$2'$ be the problem obtained by
removing the constraint that any feasible solution 
must be a dominating set from the 1-DR-2 problem.
Thus, in the 1-DR-$2'$ problem, we only focus on covering 
target couples. By Lemmas~\ref{HMust} and~\ref{DS}, 
a solution $D$ is feasible for 
the 1-DR-$2'$ problem with input $G'$ if and only if 
$D$ is a feasible solution of $I_D$.
Thus, the inapproximability result also applies to the 1-DR-$2'$ problem.  
Finally, the proof follows from Lemma~\ref{triangle-free}. \qed

Lemma~\ref{UB} is a direct result of the following claim.
\begin{myclaim}
\label{SHfeasible}
If $S$ is a feasible solution of $I_{MR}$, 
then $S \cup H$ is a feasible solution of $I_D$.
\end{myclaim}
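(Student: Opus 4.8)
\textbf{Proof proposal for Claim~\ref{SHfeasible}.}
The plan is to check the two conditions that make $S \cup H$ feasible for the 1-DR-2 instance $I_D$: that it is a dominating set of $G'$, and that it covers every target couple of $G'$. (Connectivity of a feasible solution is automatic for the 1-DR-$\alpha$ problem, so nothing else is needed.) The dominating-set condition is immediate from Lemma~\ref{DS}: $H$ already dominates $G'$, and any superset of a dominating set is a dominating set, so $S \cup H \supseteq H$ dominates $G'$.

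For the covering condition, I would first dispatch the target couples outside $[PX,PY]$. By Lemma~\ref{H}, $H$ covers all of them; and since $m^{D}(u,v)$ can only decrease as $D$ grows, $S \cup H \supseteq H$ covers them as well. It remains to cover the target couples in $[PX,PY]$. The first step here is to classify exactly which pairs these are. Using the neighbourhood lists displayed right after the construction, the only vertex set appearing in both $N(px)$ for $px \in PX$ and $N(py)$ for $py \in PY$ is $R$, so any common neighbour of a $PX$-vertex and a $PY$-vertex is a relay; and the relay $r^I_{i,j}$, which exists exactly when $(X_i,Y_j)$ is a super edge, is adjacent within $PX \cup PY$ only to $px^I_i$ and $py^I_j$. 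Hence every target couple in $[PX,PY]$ has the form $[px^I_i, py^I_j]$ with $I \in \{1,2\}$ and $(X_i,Y_j)$ a super edge — precisely the couples treated by Lemma~\ref{CoverPXPY}.

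Finally, fix a super edge $(X_i,Y_j)$. Feasibility of $S$ for $I_{MR}$ gives $x \in X_i \cap S$ and $y \in Y_j \cap S$ with $(x,y) \in E$. By construction of $G'$, $x$ is adjacent to both $px^1_i$ and $px^2_i$, $y$ is adjacent to both $py^1_j$ and $py^2_j$, and $(x,y)$ is an edge of $G'$; thus $(px^1_i,x,y,py^1_j)$ and $(px^2_i,x,y,py^2_j)$ are paths in $G'$ whose internal vertices $x,y$ lie in $S \subseteq S \cup H$, so the first condition of Lemma~\ref{CoverPXPY} holds for the pair of couples $[px^1_i,py^1_j]$, $[px^2_i,py^2_j]$. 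Ranging over all super edges, $S \cup H$ covers every target couple in $[PX,PY]$, and combined with the previous paragraph this shows $S \cup H$ covers all target couples; hence it is feasible for $I_D$. The only mildly delicate point is the classification of the $[PX,PY]$ target couples from the neighbourhood lists; the rest is a direct appeal to Lemmas~\ref{DS}, \ref{H}, and~\ref{CoverPXPY} together with the feasibility of $S$.
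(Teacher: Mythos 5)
Your proposal is correct and follows essentially the same route as the paper: reduce to the target couples in $[PX,PY]$ via Lemmas~\ref{DS} and~\ref{H}, observe that such a couple forces $I_1=I_2$ and a super edge $(X_i,Y_j)$, and then use feasibility of $S$ to produce the path $(px^I_i,x,y,py^I_j)$. The only cosmetic difference is that you conclude coverage by invoking Lemma~\ref{CoverPXPY}, whereas the paper simply notes directly that the path has two internal vertices in $S$.
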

\begin{proof}
Since $H$ is a dominating set, by Lemma~\ref{H}, 
it suffices to prove that every target couple 
$[u,v] = [px^{I_1}_i, py^{I_2}_j]$ in $[PX,PY]$ is covered by $S$.
Note that $[px^{I_1}_i, py^{I_2}_j]$ cannot be a target couple if $I_1 \neq I_2$. 
This is because $px^{I_1}_i$ and $py^{I_2}_j$ do not have a common neighbor 
if $I_1 \neq I_2$.
If $I_1 = I_2$, then the common neighbor must be $r^I_{i,j}$.
By the construction of $G'$, this implies that $X_i$ and $Y_j$ 
form a super edge.
Since $S$ is a feasible solution of $I_{MR}$, there exists $x \in X_i$
and $y \in Y_j$ such that $x$ and $y$ are adjacent in $G$ and $\{x,y\} \subseteq S$. 
Again, by the construction of $G'$, $(u, x, y, v)$ is a path in $G'$. 
Hence, $S \supseteq \{x,y\}$ covers $[u,v]$.
\end{proof}

To prove Lemma~\ref{LB}, we use the following claim.
\begin{myclaim}
\label{XYonly}
$I_D$ has an optimal solution $D^*$, 
such that $D^* \setminus H$ is a feasible solution of $I_{MR}$.
\end{myclaim}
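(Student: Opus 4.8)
The plan is to start from an arbitrary optimal solution $D$ of $I_D$ and transform it, without increasing its cardinality, into an optimal solution $D^*$ whose only non-hub vertices lie in $X \cup Y$ and constitute a MIN-REP solution. First I would record a fact used implicitly throughout: every feasible solution of $I_D$ contains $H$. Indeed, the two dummy nodes attached to a hub $h$ have degree $1$, so they form a target couple whose only available internal vertex is $h$ itself; by Lemma~\ref{HMust} this forces $h$ into the solution, hence $H \subseteq D$.

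Now the surgery. Call a super edge $(X_i,Y_j)$ \emph{unwitnessed}, and collect these in $\mathcal{B}$, if $D$ contains no pair $x \in X_i$, $y \in Y_j$ that is adjacent in $G$; for each unwitnessed super edge fix such an adjacent pair $x_{ij} \in X_i$, $y_{ij} \in Y_j$, which exists by the definition of a super edge. Define
\[
D^* \;=\; \bigl(D \cap (X \cup Y)\bigr)\;\cup\; H \;\cup\; \{\, x_{ij},\,y_{ij} : (X_i,Y_j)\in\mathcal{B}\,\}.
\]
I would then check three things. (i) \emph{Feasibility.} $D^*$ dominates $G'$ since $H\subseteq D^*$ and $H$ is a dominating set (Lemma~\ref{DS}). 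By Lemma~\ref{H}, $H$ covers every target couple outside $[PX,PY]$, so it remains to handle a couple in $[PX,PY]$; as in the proof of Claim~\ref{SHfeasible} such a couple is $[px^I_i,py^I_j]$ with $(X_i,Y_j)$ a super edge. If $(X_i,Y_j)\in\mathcal{B}$ then $x_{ij},y_{ij}\in D^*$ and $(px^I_i,x_{ij},y_{ij},py^I_j)$ is a path of $G'$, so $m^{D^*}(px^I_i,py^I_j)\le 2$; otherwise the adjacent pair $x\in X_i\cap D$, $y\in Y_j\cap D$ lies in $D\cap(X\cup Y)\subseteq D^*$ and gives the same kind of path. (ii) \emph{No size blow-up.} For $(X_i,Y_j)\in\mathcal{B}$, $D$ is feasible and covers $[px^1_i,py^1_j]$ and $[px^2_i,py^2_j]$, so Lemma~\ref{CoverPXPY} applies; its first alternative would hand us an adjacent pair in $X_i\cap D$, $Y_j\cap D$ (a path $(px^1_i,x,y,py^1_j)$ forces $x\in X_i$, $y\in Y_j$), contradicting $(X_i,Y_j)\in\mathcal{B}$; hence $\{r^1_{i,j},r^2_{i,j}\}\subseteq D$. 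These relay pairs are pairwise disjoint over $\mathcal{B}$, so $D$ has at least $2|\mathcal{B}|$ vertices outside $X\cup Y\cup H$, whence $|D^*|\le |D\cap(X\cup Y)|+|H|+2|\mathcal{B}|\le |D|$; since $D$ is optimal, $D^*$ is optimal too. (iii) \emph{MIN-REP feasibility of $D^*\setminus H$.} By construction $D^*\setminus H = (D\cap(X\cup Y))\cup\{x_{ij},y_{ij}\}\subseteq X\cup Y$, and every super edge is covered by it --- the unwitnessed ones through $x_{ij},y_{ij}$, the rest through the pair already present in $D$.

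The genuinely delicate point is step (ii): the argument works only because the reduction installs two parallel relays $r^1_{i,j},r^2_{i,j}$ per super edge, so that deleting them (together with all of $PX\cup PY\cup R\cup M$) and inserting the two new vertices $x_{ij},y_{ij}$ is size-neutral; with a single relay per super edge the swap could double the cost. One must also be certain that discarding the relays and the $PX$, $PY$, $M$ vertices cannot uncover any target couple, which is exactly what Lemma~\ref{H} and the neighbourhood description of $G'$ guarantee, since those vertices can only ever be internal to paths realizing couples in $[PX,PY]$ or couples that $H$ already covers.
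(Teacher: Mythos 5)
Your proof is correct and follows essentially the same route as the paper's: both hinge on Lemma~\ref{CoverPXPY} to show that any super edge not witnessed by an adjacent pair in $D\cap(X\cup Y)$ forces both relays $r^1_{i,j}$ and $r^2_{i,j}$ into $D$, so that trading the relay pair for an adjacent pair $x\in X_i$, $y\in Y_j$ is size-neutral. The only difference is presentational: you perform the exchange in one shot with a global counting argument (and explicitly justify $H\subseteq D$ and the harmlessness of discarding $PX\cup PY\cup M$), whereas the paper iterates a local replacement starting from an optimal solution already argued to lie in $H\cup X\cup Y\cup R$.
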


\textbf{Proof of Lemma~\ref{LB}:}
Let $S^*$ be the optimal solution of $I_{MR}$.
By the assumption, we have $|S^*| \geq s \cdot 2^{\log^{1-\epsilon}{n}}$.
It suffices to prove that $S^* \cup H$ is an optimal solution for $I_D$, 
which implies that every feasible solution of $I_D$ has size at least 
$|S^* \cup H| = |S^*| + 4 \geq s \cdot 2^{\log^{1-\epsilon}{n}} + 4$. 
The feasibility of $S^* \cup H$ follows from Claim~\ref{SHfeasible}.
For the sake of contradiction, assume that the optimal solution of $I_D$ has 
size smaller than $|S^* \cup H|=|S^*|+4$. 
Claim~\ref{XYonly} and Lemma~\ref{HMust} then imply that 
$S^*$ is not an optimal solution of $I_{MR}$,
which is a contradiction. \qed

\textbf{Proof of Claim~\ref{XYonly}:}
Let $D_{OPT}$ be any optimal solution of $I_D$.
By Lemmas~\ref{HMust}, \ref{H}, and \ref{CoverPXPY}, 
$D_{OPT} \subseteq H \cup X \cup Y \cup R$.
If $D_{OPT} \cap R = \emptyset$, 
by Lemma~\ref{CoverPXPY}, each target couple $[px^I_i,py^I_j]$ 
is covered by some $x \in X$ and some $y \in Y$. 
By the construction of $G'$,  
such $x$ and $y$ also cover the super edge $(X_i, Y_j)$ in $I_{MR}$.
Because each super edge in $I_{MR}$ has a corresponding target couple in $I_D$, 
$D_{OPT} \setminus H$ is a feasible solution of $I_{MR}$.

If $D_{OPT} \cap R \neq \emptyset$, then some $r^I_{i,j} \in D_{OPT}$.
We can further assume that both $r^1_{i,j}$ and $r^2_{i,j}$ are in $D_{OPT}$;
otherwise, by Lemma~\ref{CoverPXPY}, we can remove $r^I_{i,j}$ from $D_{OPT}$, 
the resulting solution is smaller and is still feasible.
We then replace $r^1_{i,j}$ and $r^2_{i,j}$ with 
some $x \in X$ and some $y \in Y$
satisfying the first condition in Lemma~\ref{CoverPXPY}. 
By Lemma~\ref{CoverPXPY}, the resulting solution is still feasible, 
and the size remains the same.
Repeat the above replacing process until the resulting solution does 
not contain any relay. The proof then follows from the argument 
of the case where $D_{OPT} \cap R = \emptyset$. \qed

\section{Transforming the 1-DR-$\alpha$ Problem to Other Related Problems}
\textbf{Submodular Cost Set Cover Problem:}
The 1-DR-$\alpha$ problem can also be considered as a special case of the
submodular cost set cover problem~\cite{Du2011_submodular, 5438589, Wan2010}.
In the set cover problem, we are given a set of targets $\mathcal{T}$ 
and a set of objects $\mathcal{S}$.
Each object in $\mathcal{S}$ can cover a subset of $\mathcal{T}$ 
(specified in the input).
The goal is to choose the smallest subset of $\mathcal{S}$ that 
covers $\mathcal{T}$. In the submodular cost set cover problem, 
there is a non-negative submodular function $c$ that maps each subset of 
$\mathcal{S}$ to a cost, and the goal is to find the set cover with 
the minimum cost. To transform the 1-DR-$\alpha$ problem with input $G=(V,E)$ 
to the submodular cost set cover problem,
let $\mathcal{T}$ be the union of $V$ and the set of all target couples,
and let $\mathcal{S}$ be the set of all subsets of $V$ with size 
at most $\alpha$. Hence, each object in $\mathcal{S}$ is a subset of $V$.
An object $S \in \mathcal{S}$ can cover a vertex $v$ if 
$v$ is adjacent to some vertex in $S$ or $v \in S$. 
An object $S \in \mathcal{S}$ can cover a target couple $[u,v]$ 
if $m^S(u,v) \leq \alpha$.
The cost of a subset $\mathcal{S}'$ of $\mathcal{S}$ is simply the size 
of the union of objects in $\mathcal{S}'$, i.e., the number of distinct 
vertices specified in $\mathcal{S}'$. 

Iwata and Nagano proposed a $|\mathcal{T}|$-approximation algorithm 
and an $f$-approximation algorithm, where $f$ is the maximum frequency, 
$\argmax_{T\in \mathcal{T}}
{|\{S \in \mathcal{S}| S \text{ covers } T\}|}$~\cite{5438589}.
Koufogiannakis and Young also proposed an $f$-approximation algorithm
when the cost function $c$ is non-decreasing~\cite{Koufogiannakis2013}.
It is easy to see that these algorithms give trivial bounds for the 
1-DR-$\alpha$ problem.
When the cost function $c$ is integer-valued, non-decreasing, and satisfies $c(\emptyset)=0$, 
Wan \textit{et al.} proposed a $\rho H(\gamma)$-approximation algorithm, where 
$\rho = \min\limits_{\mathcal{S}^*: \mathcal{S}^* \text{ is an optimal solution}}
{\frac{\sum_{S \in \mathcal{S}^*}c(\{S\})}{c(\mathcal{S}^*)}}$, 
$\gamma$ is the largest number of targets that can be covered by an object in 
$\mathcal{S}$, and $H(k)$ is the $k$-th Harmonic number~\cite{Wan2010}.
Du \textit{et al.} applied this algorithm to the 1-DR-$\alpha$ problem on UDG
for $\alpha \geq 5$ and obtained a constant factor approximation 
algorithm~\cite{Du2011_submodular}. 
It is unclear whether or not $\rho$ can be upper bounded by $O(n^{1-\epsilon})$ 
for some $\epsilon > 0$ when applied to the 1-DR-$\alpha$ problem on general graphs.

\textbf{Minimum Rainbow Subgraph Problem on Multigraphs:}
Given a set of $p$ colors and a multigraph $H$, 
where each edge is colored with one of the $p$ colors,
the Minimum Rainbow Subgraph (MRS) problem 
asks for the smallest vertex subset $D$ of $H$, 
such that each of the $p$ colors appears in some edge induced by $D$.
The 1-DR-2 problem can be transformed to the MRS problem as follows.
Let $G=(V,E)$ be the input graph of the 1-DR-2 problem.
Let $T$ be the union of $V$ and the set of all target couples.
The set of colors for the MRS problem is $\{c_i|i \in T\}$.
The input multigraph $H$ of the MRS problem has the same vertex set as $G$. 
To form a dominating set, for each $v \in V$, 
$v$ is incident to $d(v)+1$ loops $(v,v)$ in $H$, 
where $d(v)$ is the degree of $v$ in $G$.
Each of these loops receives a different color in 
$\{c_v\} \cup \{c_u| (u,v) \in E\}$.
For each target couple $[u,v]$ in $G$, if $w$ is a common 
neighbor of $u$ and $v$ in $G$, 
we add a loop $(w,w)$ with color $c_{[u,v]}$ to $H$.
Finally, for each target couple $[u,v]$ in $G$, 
if $(u,w_1, w_2,v)$ is a path in $G$, 
we add an edge $(w_1, w_2)$ with color $c_{[u,v]}$ to $H$.
The MRS problem can be transformed to the SCP problem.
When the input graph is simple, Tirodkar and Vishwanathan proposed an 
$O(n^{1/3}\log n)$-approximation algorithm~\cite{Tirodkar2017}.

\bibliographystyle{abbrv}

\appendix
\section{Proof of Claim~\ref{c}}
\begin{align*}
n^{1-c} &= \sqrt{n \cdot \alpha(n^c)^{\alpha-2}\ln t} \\
\Leftrightarrow n^{2-2c} &= n \cdot \alpha(n^c)^{\alpha-2}\ln t 
\text{ (both sides are non-negative)} \\
\Leftrightarrow n^{2-2c-(1+c(\alpha-2))} &= \alpha \ln t \\
\Leftrightarrow n^{1-c\alpha} &= \alpha \ln t.
\end{align*}
When $c = \frac{1}{\alpha}-\frac{\ln \ln (t^{\alpha})}{\alpha \ln n}$, 
\begin{align}
n^{1-c\alpha} &= n^{1-(1-\frac{\ln \ln (t^{\alpha})}{\ln n})} \nonumber \\ 
&= n^{\frac{\ln \ln (t^{\alpha})}{\ln n}} \label{eq0} \\ 
&= (n^{\ln (\ln (t^{\alpha}))})^{\frac{1}{\ln n}} \label{eq1} \\ 
&= ((\ln (t^{\alpha}))^{\ln n})^{\frac{1}{\ln n}} \label{eq2} \\
&= (({\alpha}\ln t)^{\ln n})^{\frac{1}{\ln n}} \label{eq3} \\
&= {\alpha}\ln t \label{eq4}.
\end{align}
Hence, when $c = \frac{1}{\alpha}-\frac{\ln \ln (t^{\alpha})}{\alpha \ln n}$, 
$n^{1-c}=\sqrt{n \cdot \alpha(n^c)^{\alpha-2}\ln t}$.

Finally, when $c = \frac{1}{\alpha}-\frac{\ln \ln (t^{\alpha})}{\alpha \ln n}$,
\begin{align*}
n^{1-c} &= n^{1-\frac{1}{\alpha}+\frac{\ln \ln (t^{\alpha})}{\alpha \ln n}}\\
&= n^{1-\frac{1}{\alpha}} \cdot n^{\frac{\ln \ln (t^{\alpha})}{\alpha \ln n}}\\
&= n^{1-\frac{1}{\alpha}} 
\cdot (n^{\frac{\ln \ln (t^{\alpha})}{\ln n}})^{\frac{1}{\alpha}}\\
&= n^{1-\frac{1}{\alpha}} \cdot (\alpha \ln t)^{\frac{1}{\alpha}}.
\end{align*}
In the last equality, we reuse Eq.\eqref{eq0}-Eq.\eqref{eq4}. \qed

\section{Proof of Lemma~\ref{H}}
If $[u,v]$ is in $[PX, PX \cup \{h_{X,R}, h_{Y,R}\}]$, 
                 $[PY, PY \cup \{h_{X,R}, h_{Y,R}\}]$, 
                 $[X, X \cup R \cup \{h_{PX}, h_{PY}\}]$,
                 $[Y, Y \cup R \cup \{h_{PX}, h_{PY}\}]$,
                 or $[R, R \cup \{h_{PX}, h_{PY}\}]$, 
                 then $[u,v]$ can be covered by one vertex in $H$.
If $[u,v]$ is in $[PX,Y], [PY,X], [X, \{h_{Y,R}\}]$, or $[Y,\{h_{X,R}\}]$, 
                 then $[u,v]$ can be covered by an edge in $H$.
If $[u,v]$ is in $[PX, X \cup R \cup \{h_{PX}, h_{PY}\}]$, 
                 $[PY, Y \cup R \cup \{h_{PX}, h_{PY}\}]$,
                 or $[X,Y]$,  
                 then $[u,v]$ cannot be a target couple 
                 (since $u$ and $v$ do not have a common neighbor). 
If $[u,v]$ is in $[X,\{h_{X,R}\}], [Y,\{h_{Y,R}\}]$, 
                 or $[R, \{h_{X,R}, h_{Y,R}\}]$, 
                 then $[u,v]$ cannot be a target couple 
                 (since $u$ and $v$ are adjacent\footnote{In addition, 
                 by Lemma~\ref{triangle-free}, 
                 $u$ and $v$ do not have a common neighbor.}).
Moreover, it is easy to see that $H$ covers all the target couples in $[H,H]$ or
$[V(G'), M]$, where $V(G')$ is the vertex set of $G'$.
Finally, observe that if $[u,v]$ is in $[PX,PY]$, 
then $H$ cannot cover $[u,v]$. \qed
\end{document}